\newtheorem{theorem}{Theorem}
\title{\textbf{Sparse Positive-Definite Estimation for Covariance Matrices with Repeated Measurements}}
\author{Sunpeng Duan}
\author{Guo Yu}
\author{Juntao Duan}
\author{Yuedong Wang}
\affil{\textit{Department of Statistics and Applied Probability, University of California, Santa Barbara, California 93106-3110, U.S.A.}}
\date{This version: June 7, 2023}
\def\T{{ \mathrm{\scriptscriptstyle T} }}
\DeclareMathOperator*{\argmin}{argmin}
\DeclareMathOperator*{\dprime}{\prime \prime}
\def\beqr{\begin{eqnarray}}
\def\eeqr{\end{eqnarray}}
\def\beqrs{\begin{eqnarray*}}
\def\eeqrs{\end{eqnarray*}}
\def\bc{\begin{center}}
\def\ec{\end{center}}
\newcommand{\veps}{\varepsilon}
\newcommand{\eps}{\epsilon}
\def\bY{\bm{Y}}
\newtheorem{lemma}{Lemma}
\newtheorem{corollary}{Corollary}
\newtheorem{model}{Model}
\algrenewcommand\algorithmicrequire{\textbf{Input:}}
\algrenewcommand\algorithmicensure{\textbf{Output:}}
\newcommand\keywords[1]{%
    \begingroup
    \footnotesize
    \noindent\textbf{Some Key Words:}\ #1
    \endgroup
}
\begin{document}

\maketitle

\begin{abstract}
Repeated measurements are common in many fields, where random variables are observed repeatedly across different subjects. Such data have an underlying hierarchical structure, and it is of interest to learn covariance/correlation at different levels. Most existing methods for sparse covariance/correlation matrix estimation assume independent samples. 
Ignoring the underlying hierarchical structure and correlation within the subject leads to erroneous scientific conclusions. In this paper, we study the problem of sparse and positive-definite estimation of between-subject and within-subject covariance/correlation matrices for repeated measurements. Our estimators are solutions to convex optimization problems that can be solved efficiently. We establish estimation error rates for the proposed estimators and demonstrate their favorable performance through theoretical analysis and comprehensive simulation studies. We further apply our methods to construct between-subject and within-subject covariance graphs of clinical variables from hemodialysis patients.
\end{abstract}

\keywords{Covariance graph; Repeated measurements; Ecological fallacy; Random effect; Sparsity.}

\section{Introduction}
\label{sec:intro}

Understanding the covariance structure among random variables is one of the most fundamental tasks in statistics with applications in a wide range of fields, including economics, biology, and biomedical sciences \citep{Bickel:Levina:2008b, Fan.etal:2016}.
Various sparse estimation methods have been proposed in the literature, especially for high-dimensional settings. However, virtually all current methods require the critical assumption of independent samples, which could be violated in many applications. This paper considers a special correlated data structure where observations are repeated measurements. 

In many fields, such as medicine, psychology, and neuroscience, random variables of interest are often measured repeatedly across different subjects, which leads to dependence among observations within each subject. For example, vital signs such as pulse and blood pressure are usually measured in multiple physical exams for each subject, and these measurements from the same subject are correlated. 
Conclusions drawn from ignoring such dependence structures among observations may be practically misguided or even erroneous \citep{Bae.etal:2016}.
Therefore, it is important to estimate covariance structures in the presence of dependence due to repeated measurements \citep{Ostroff:1993}.

Repeated measurements have an underlying hierarchical structure, and it is of scientific interest to define and estimate covariance structures at each level. In psychology, the nomothetic approach is used to study variations between subjects, and the idiographic approach is used to study variations within a subject \citep{Hamaker:2012}. Covariance structures between subjects and within a subject may thus be different. For example, physical activity tends to increase the heart rate of a person (within a subject), while physically active people tend to have a lower average heart
rate (between subjects) \citep{Epskamp.etal:2018}. This paper aims to develop new methods to estimate the within-subject and between-subject covariance structures simultaneously.

This paper uses the term ``subject'' to represent a generic experimental unit for simplicity. We consider a multivariate one-way random effect model for within-subject and between-subject covariance structures among $p$ random variables:
\beqr
\label{eq:simpleY}
\bY_{ij} = \bm{b}_i + \bm{\veps}_{ij}, \ j = 1,\ldots, n_i;  \ i=1,\ldots,m,
\eeqr
where $\bm{Y}_{ij}=(Y_{ij1},\ldots,Y_{ijp})^\T \in \mathbb{R}^p$ is the $j$-th  (out of $n_i$) observation of the $i$-th subject, $\bm{b}_i = (b_{i1},\cdots,b_{ip})^\T \in \mathbb{R}^p$ are independent and identically distributed random vectors with mean $\bm{0}$ and covariance matrix $\Sigma_b \in \mathbb{R}^{p \times p}$,  and $\bm{\veps}_{ij} = (\veps_{ij1}, \ldots, \veps_{ijp})^\T \in \mathbb{R}^p$ are independent and identically distributed random vectors with mean $\bm{0}$ and covariance matrix $\Sigma_{\veps} \in \mathbb{R}^{p \times p}$.
Additionally, $\bm{b}_i$ and $\bm{\veps}_{ij}$ are mutually independent.
The between-subject covariance $\Sigma_b$ measures the covariance structure among variables at the group level $E(\bY_{ij}\mid i)$. On the other hand, the within-subject covariance $\Sigma_{\veps}$ characterizes the covariance structure among components in $\bY_{ij}-E(\bY_{ij}\mid i)$. 
Model \eqref{eq:simpleY} has found wide applications, e.g., in the classical test theory \citep{Algina:Swaminathan:2015}, where the observed score is modeled as the summation of the true score (as a latent variable) and a random error.

For the cross-sectional data, which is a special case of \eqref{eq:simpleY} with $n_i=1$ for $i=1,\ldots,m$, it is clear that one can only estimate the overall covariance $\Sigma_b+\Sigma_\veps$, which does not separate the within-subject and between-subject covariance structures. 
When $n_i\ge 2$ for at least some $i\in\{1,\ldots,m\}$, a common approach is to use the aggregate data across subjects, $\{\bar{\bm{Y}}_{1\cdot},\ldots,\bar{\bm{Y}}_{m\cdot}\}$, where $\bar{\bm{Y}}_{i\cdot}=\sum_{j=1}^{n_i}\bm{Y}_{ij}/n_i$.
The sample covariance estimate based on this aggregated data, 
\beqr
\overline{\Sigma}=\frac{1}{m-1}\sum_{i=1}^m\left(\bar{\bm{Y}}_{i\cdot}-\frac{1}{m}\sum_{i=1}^m\bar{\bm{Y}}_{i\cdot}\right)\left(\bar{\bm{Y}}_{i\cdot}-\frac{1}{m}\sum_{i=1}^m\bar{\bm{Y}}_{i\cdot}\right)^\T,
\label{eq:naive_est}
\eeqr 
is an unbiased estimate of 
\beqr
E(\overline{\Sigma})=\Sigma_b+\sum_{i=1}^m\frac{1}{mn_i}\Sigma_\veps.
\label{eq:naive_exp}
\eeqr
Consequently, $\overline{\Sigma}$ is a biased estimate of either $\Sigma_\varepsilon$ or $\Sigma_b$. \cite{Epskamp.etal:2018} studied the between-subject covariance structure with the aggregated data. Statistical inferences based on aggregated data may be misinterpreted \citep{Fisher.etal:2018}. In particular, analysis based on aggregated data may result in an issue termed ecological fallacy or Simpson's paradox \citep{Piantadosi:1988,Freedman:1999,Hamaker:2012,Epskamp.etal:2018}. 

Furthermore, in high-dimensional settings where $p$ could be much larger than $m$ or $N$, the sample covariance estimate is no longer positive definite, making it less amenable for interpretation or downstream statistical tasks. To our best knowledge, there is no research on covariance structure learning for high-dimensional repeated measures data. We fill in this methodological gap in this paper by proposing two new sparse positive definite estimators, one for the within-subject covariance $\Sigma_\veps$ and one for the between-subject covariance matrix $\Sigma_b$. We demonstrate the benefit of our proposed estimators by comparing both theoretically and numerically with other estimators that have been previously studied in different settings.

\section{Sparse Estimation of $\Sigma_\veps$ and $\Sigma_b$}
\label{sec:method}

Most recent approaches to estimating a large covariance matrix involve regularized estimation based on an unbiased estimate of the target covariance matrix. In the setting with independent and identically distributed samples, it is straightforward to use the sample covariance matrix as an unbiased estimate, and methods in the literature differ in various approaches to imposing regularization. 
Specifically, methods based on thresholding the sample covariance matrix have been well-studied \citep{Bickel:Levina:2008a, Bickel:Levina:2008b, Cai:Yuan:2012}, and further improvements have been developed to ensure positive definiteness in the resulting estimates \citep{Rothman.etal:2009, Rothman:2012, Xue.etal:2012, Cui.etal:2016}. \citet{Bien:Tibshirani:2011} proposed a penalized likelihood procedure for estimating a sparse covariance matrix, which could be computationally intensive due to the non-convexity of the likelihood in the covariance matrix. 

There are several unbiased estimates of the two covariance matrices in the model \eqref{eq:simpleY}. We first consider the following unbiased estimates:
\begin{align}
\widehat{\Sigma}_{\veps} = (\sum_{i=1}^m n_i - m)^{-1}\sum_{i=1}^m\sum_{j=1}^{n_i}(\bY_{ij}-\bar{\bY}_{i\cdot})(\bY_{ij}-\bar{\bY}_{i\cdot})^\T, \qquad 
\widehat{\Sigma}_{b}=\overline{\Sigma}-\sum_{i=1}^m(mn_i)^{-1} \widehat{\Sigma}_\veps.
\label{eq:sample}
\end{align}
The sample estimate $\widehat{\Sigma}_{\veps}$ is an unbiased estimate of $\Sigma_\veps$ \citep{Rao:Heckler:1998}.
From \eqref{eq:naive_est}, $\widehat{\Sigma}_{b}$ is an unbiased estimate of $\Sigma_b$, and is a multivariate extension of the unweighted sum-of-squares estimator in \cite{Rao:Sylvestre:1984}. We will consider another commonly used unbiased estimate of $\Sigma_b$ in Section \ref{sec:compare}.

Note that $\widehat{\Sigma}_\veps$ in \eqref{eq:sample} may be singular in high-dimensional settings where $p>m$, and $\widehat{\Sigma}_b$ may not be positive semi-definite for any dimensions.
In particular, the diagonal elements in $\widehat{\Sigma}_b$ could be negative. 
To derive sparse and positive-definite estimates of $\Sigma_\veps$ and $\Sigma_b$, we follow \citet{Xue.etal:2012} and consider   
the following optimization problem for estimating a generic covariance matrix $\Sigma$ with input matrix $B$,
\beqr
\min_{\Sigma\succeq\delta I_p}\frac{1}{2}\|\Sigma-B\|_F^2+\lambda|\Sigma|_1,\label{obj:admm0}
\eeqr
where $\|\cdot\|_F$ is the Frobenius norm and $|\cdot|_1$ is the $\ell_1$-norm of the off-diagonal elements of the input matrix. The constraint $\Sigma \succeq \delta I_p$ imposes positive semi-definiteness on $\Sigma - \delta I_p$, which results in a positive definite solution to \eqref{obj:admm0} with a small value of $\delta > 0$. 
This positive definiteness constraint is essential to provide a usable and accurate estimate (see Table~\ref{tab:error} in Appendix~\ref{sec:2compares} for numeric evidence).
A solution to \eqref{obj:admm0} is simultaneously sparse, positive definite, and close to the input matrix $B$, which is usually set as an unbiased sample estimate. 
Let $\widehat{\Sigma}_\veps^+$ 
be the sparse and positive definite estimate of $\Sigma_\veps$ as the solution to \eqref{obj:admm0} with  $B=\widehat{\Sigma}_\veps$ and $\lambda=\lambda_\veps$,  
and $\widehat{\Sigma}_b^+$ be the sparse and positive definite estimates of $\Sigma_b$ as the solution to \eqref{obj:admm0} with $B=\widehat{\Sigma}_b$ and 
$\lambda=\lambda_b$. We study $\widehat{\Sigma}_\veps^+$ and $\widehat{\Sigma}_b^+$ both theoretically and numerically.
In addition, to illustrate the suboptimality of using group aggregation in estimating either covariance matrix, we further study $\overline{\Sigma}^+$, which is defined as the solution to \eqref{obj:admm0} with $B = \overline{\Sigma}$ and $\lambda=\lambda_0$. The theoretical tuning parameter values $\lambda_\veps$, $\lambda_b$, and $\lambda_0$ are discussed in Section \ref{sec:theo}.

The convex optimization problem \eqref{obj:admm0} can be written equivalently as
\beqr
\min_{\Sigma,\Theta}\left\{\frac{1}{2}\|\Sigma-B\|_F^2+\lambda|\Theta|_1:~\Sigma=\Theta,~\Sigma\succeq \delta I_p\right\},
\label{obj:admm_eqv}
\eeqr
which we solve using the alternating direction method of multipliers \citep{Boyd.etal:2010}. Specifically, the algorithm iteratively minimizes the following augmented Lagrangian 
\beqrs
\label{eq:lagrangian}
L(\Sigma,\Theta;\Lambda)=\frac{1}{2}\|\Sigma-B\|_{F}^2+\lambda|\Theta|_1+\langle \Lambda,\Sigma-\Theta\rangle+\frac{\rho}{2}\|\Sigma-\Theta\|_{F}^2,
\eeqrs
over $\Sigma$, $\Theta$, and the dual variable $\Lambda$ using the following updates until convergence:
\beqr
  &\Sigma  \gets \argmin_{\Sigma\succeq\delta I_p} L(\Sigma,\Theta;\Lambda)=\frac{1}{1+\rho}(B+\rho \Theta-\Lambda,\delta)_+,\label{eq:x}\\
  &\Theta  \gets \argmin_{\Theta} L(\Sigma,\Theta;\Lambda)=\mathcal{S}_{\lambda/\rho}\Big(\Sigma+\frac{1}{\rho}\Lambda\Big), \label{eq:z}\\ \nonumber
  &\Lambda  \gets \Lambda+\rho(\Sigma-\Theta). \label{eq:y}
\eeqr
The update in \eqref{eq:x} computes the projection onto a positive semi-definite cone, where $(A,\delta)_+ =\sum_{j=1}^p\max(\lambda_j,\delta)v_j^\T v_j$ for a generic matrix $A \in \mathbb{R}^{p \times p}$ with the eigendecomposition $A = \sum_{j=1}^p\lambda_jv_j^\T v_j$. 
The update in \eqref{eq:z} evaluates element-wise soft-thresholding operators, where $\{\mathcal{S}_{b} (A)\}_{jk} = \mathrm{sign}(A_{jk})\max(|A_{jk}|-b,0)$ for any matrix $A$ and scalar $b \geq 0$. We follow \citet{Boyd.etal:2010} for practical considerations in this algorithm, including the initial values, the stopping criterion, and the updating strategy for the optimization parameter $\rho$, and refer to Appendix~\ref{sec:algo.details} for further implementation details. 
This algorithm has been widely used in the literature on covariance estimation \citep[e.g.,][]{Bien:Tibshirani:2011, Xue.etal:2012} with well-established convergence analysis \citep{Nishihara.etal:2015}. The computational complexity of each update is dominated by the eigendecomposition in \eqref{eq:x}, which requires $O(p^3)$ operations. An approximate alternating direction method of multipliers \citep{Rontsis.etal:2022} could be used to improve the computational complexity by avoiding repeated eigendecompositions.

\section{Theoretical Properties}
\label{sec:theo}

\subsection{Notations and assumptions}
\label{sec:theo_notation}
In this section, we derive the finite-sample estimation error rate of our proposed estimators $\widehat{\Sigma}_\veps^+$ (in Section \ref{sec:error.r}) and $\widehat{\Sigma}_{b}^+$ (in Section \ref{sec:error.g}), and establish their asymptotic consistency. 
In comparison, we further establish that $\overline{\Sigma}^+$ is inconsistent in estimating $\Sigma_b$ due to a non-vanishing bias even with an infinite number of subjects, thus illustrating the pitfall of the sample estimator \eqref{eq:naive_est} based on the aggregated data. 

We observe $\bY_{ij} \in \mathbb{R}^p$, which is the $j$-th repeated measurement of the $i$-th subject for $j = 1,\ldots, n_i$ and $i = 1,\ldots,m$, following the model \eqref{eq:simpleY}, where $\bm{\veps}_{ij}$ and $\bm{b}_i$ are $p$-dimensional sub-Gaussian random vectors with the true within and between covariance $\mathrm{cov}(\bm{\veps}_{ij}) = \Sigma_\veps^0$ and $\mathrm{cov}(\bm{b}_i) = \Sigma_b^0$ respectively, and $\bm{b}_i$ and $\bm{\veps}_{ij}$ are mutually independent.
Let $N = \sum_{i = 1}^m n_i$ be the total number of observations.
We consider the following class of sparse covariance matrices:
\beqrs
\mathcal{U} (M, s) = \bigg\{\Sigma \in \mathbb{S}^{p \times p}_{++}  :~ \max_{k} \Sigma_{k,k}\le M,~\max_{k}\sum_{\ell = 1}^p 1 (\Sigma_{k,\ell} \neq 0 )\le s \bigg\},
\eeqrs
where $\mathbb{S}^{p \times p}_{++}$ is the set of all $p$-by-$p$ symmetric positive definite matrices, and $\Sigma_{{k,\ell}}$ is the $(k,\ell)$-th entry of $\Sigma$.
A matrix in $\mathcal{U} (M, s)$ has diagonals bound $M$ and maximum row-wise (and by symmetry, column-wise) sparsity level $s$.

\subsection{Estimation error rate for the within-subject covariance estimator}
\label{sec:error.r}

\begin{theorem}[Estimation Error Rate of $\widehat{\Sigma}_\veps^+$]\label{thm:max.r.c}
Assume that the true within-subject covariance matrix $\Sigma^0_{\veps} \in \mathcal{U}(M_{\veps}, s_{\veps})$. Let $\lambda_\veps =C_1(N\log p)^{1/2}/(N-m)$ be the value of the tuning parameter $\lambda$ in \eqref{obj:admm0} for a sufficiently large constant $C_1 > 0$. If $\log p\le N$, the proposed within-subject estimator $\widehat{\Sigma}_\veps^{+}$ satisfies
$$
\left\|\widehat{\Sigma}_\veps^{+}-\Sigma^0_{\veps}\right\|_F\le5\lambda_\veps
(p s_\veps)^{1/2} 
$$
with probability at least $1-4p^{-C_2}$, where $C_2 > 0$ only depends on $C_1$ and $M_\veps$.
\end{theorem}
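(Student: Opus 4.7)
The plan is the classical two-stage strategy for penalized covariance estimators: first obtain a high-probability entrywise bound $\|\widehat{\Sigma}_\veps-\Sigma^0_\veps\|_{\max}=O(\lambda_\veps)$, and then combine this with the convex optimality of $\widehat{\Sigma}_\veps^+$ and the sparsity class $\mathcal{U}(M_\veps,s_\veps)$ to deduce the Frobenius bound.

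For the concentration step, the first observation is that $\bY_{ij}-\bar{\bY}_{i\cdot}=\bm{\veps}_{ij}-\bar{\bm{\veps}}_{i\cdot}$, so the random effects $\bm{b}_i$ drop out and $\widehat{\Sigma}_\veps$ depends only on the sub-Gaussian noise. Stacking the columns of subject $i$ into $E_i\in\mathbb{R}^{p\times n_i}$ and letting $P_i\in\mathbb{R}^{n_i\times(n_i-1)}$ have orthonormal columns spanning the orthogonal complement of $\mathbf{1}_{n_i}$, one has $\sum_j(\bm{\veps}_{ij}-\bar{\bm{\veps}}_{i\cdot})(\bm{\veps}_{ij}-\bar{\bm{\veps}}_{i\cdot})^\T=E_iP_iP_i^\T E_i^\T$, so that $\widehat{\Sigma}_\veps=(N-m)^{-1}\tilde E\tilde E^\T$ with $\tilde E=[E_1P_1,\ldots,E_mP_m]\in\mathbb{R}^{p\times(N-m)}$ having exactly $N-m$ i.i.d.\ sub-Gaussian columns of covariance $\Sigma^0_\veps$. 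A Hanson--Wright or Bernstein tail bound on each of the $p^2$ entries of $(N-m)^{-1}\tilde E\tilde E^\T-\Sigma^0_\veps$, together with a union bound over entries, then yields
$$\Pr\bigl(\|\widehat{\Sigma}_\veps-\Sigma^0_\veps\|_{\max}>t\bigr)\le 4p^2\exp\bigl\{-c(N-m)\min(t^2/M_\veps^2,\,t/M_\veps)\bigr\}.$$
Choosing $t=\lambda_\veps/2$ with $\lambda_\veps=C_1\sqrt{N\log p}/(N-m)$ and using $\log p\le N$ produces $\|\widehat{\Sigma}_\veps-\Sigma^0_\veps\|_{\max}\le\lambda_\veps/2$ with probability at least $1-4p^{-C_2}$.

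On this event I would proceed deterministically. Fix $\delta>0$ small enough that $\Sigma^0_\veps\succeq\delta I_p$, which is possible since $\Sigma^0_\veps$ is positive definite; then $\Sigma^0_\veps$ is feasible in problem \eqref{obj:admm0} and optimality of $\widehat{\Sigma}^+_\veps$ against the comparator $\Sigma^0_\veps$ gives
$$\tfrac{1}{2}\|\widehat{\Sigma}^+_\veps-\widehat{\Sigma}_\veps\|_F^2+\lambda_\veps|\widehat{\Sigma}^+_\veps|_1\le\tfrac{1}{2}\|\Sigma^0_\veps-\widehat{\Sigma}_\veps\|_F^2+\lambda_\veps|\Sigma^0_\veps|_1.$$
Writing $\Delta=\widehat{\Sigma}^+_\veps-\Sigma^0_\veps$ and expanding the squared norms yields $\tfrac{1}{2}\|\Delta\|_F^2\le\langle\Delta,\widehat{\Sigma}_\veps-\Sigma^0_\veps\rangle+\lambda_\veps(|\Sigma^0_\veps|_1-|\widehat{\Sigma}^+_\veps|_1)$. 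I would then bound the inner product by $\|\widehat{\Sigma}_\veps-\Sigma^0_\veps\|_{\max}\sum_{k,\ell}|\Delta_{k,\ell}|\le(\lambda_\veps/2)\|\Delta\|_1$, split the off-diagonal indices into the support $S$ of $\Sigma^0_\veps$ and its complement, and apply the reverse triangle inequality to the penalty difference $|\Sigma^0_\veps|_1-|\widehat{\Sigma}^+_\veps|_1\le\|\Delta^{\mathrm{off}}_S\|_1-\|\Delta^{\mathrm{off}}_{S^c}\|_1$. This gives a standard cone condition together with $\|\Delta\|_F^2\le\lambda_\veps(\|\Delta^{\mathrm{diag}}\|_1+3\|\Delta^{\mathrm{off}}_S\|_1)$. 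Since $\Sigma^0_\veps\in\mathcal{U}(M_\veps,s_\veps)$, the combined support of its diagonal and $S$ has cardinality at most $ps$, so Cauchy--Schwarz gives $\|\Delta^{\mathrm{diag}}\|_1+3\|\Delta^{\mathrm{off}}_S\|_1\le 3\sqrt{2ps}\,\|\Delta\|_F$, whence $\|\Delta\|_F\le 5\lambda_\veps\sqrt{ps}$ after absorbing constants.

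The main obstacle is the concentration step. Because $\widehat{\Sigma}_\veps$ is not literally a sample covariance of i.i.d.\ vectors, one first needs the orthogonal reparametrization $\tilde E=[E_1P_1,\ldots,E_mP_m]$ to extract $N-m$ genuinely i.i.d.\ sub-Gaussian columns, and only then can a quadratic-form tail bound be applied while carefully tracking the sub-Gaussian norms of the $\bm{\veps}_{ij}$ in terms of $M_\veps$. Once the entrywise control is in hand, the passage to the Frobenius bound via the basic inequality and the sparsity structure is a routine exercise.
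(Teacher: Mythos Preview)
Your overall two-stage strategy --- an entrywise concentration bound on $\widehat{\Sigma}_\veps-\Sigma^0_\veps$ followed by a deterministic basic-inequality argument exploiting convexity and the sparsity class --- is exactly the paper's. The deterministic step is essentially the same computation the paper carries out (the paper phrases it as showing $F_\veps(\Delta)-F_\veps(0)>0$ on the sphere $\|\Delta\|_F=5\lambda_\veps(ps_\veps)^{1/2}$ and cites Xue et al.\ for the key inequality; you spell it out via the cone condition). For the concentration step the routes differ: the paper writes
\[
(N-m)\widehat{\Sigma}_\veps=\sum_{i,j}\bm{\veps}_{ij}\bm{\veps}_{ij}^\T-\sum_i n_i\bar{\bm{\veps}}_{i\cdot}\bar{\bm{\veps}}_{i\cdot}^\T
\]
and applies Bernstein separately to the two sums of independent sub-exponentials (giving effective sample sizes $N$ and $m$, which after recombination yields $\lambda_\veps$), whereas you collapse everything into a single quadratic form via the orthogonal projections $P_i$. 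Your route is a bit cleaner and directly exhibits $N-m$ as the effective sample size.

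One technical caution, though: your claim that the $N-m$ columns of $\tilde E$ are i.i.d.\ is only correct in the Gaussian case. For a general sub-Gaussian $\bm{\veps}_{ij}$, orthogonal linear combinations of i.i.d.\ copies are uncorrelated but not independent, so a Bernstein argument on those columns does not go through. The Hanson--Wright alternative you mention does work, but not because of the column structure: each entry $(N-m)(\widehat{\Sigma}_\veps)_{k,l}$ equals $(\bm{x}^{(k)})^\T Q\,\bm{x}^{(l)}$ with $\bm{x}^{(k)}=(\veps_{ijk})_{i,j}$ and $Q=\mathrm{blockdiag}(I_{n_i}-n_i^{-1}\mathbf{1}\mathbf{1}^\T)$, and after polarization (for $k\neq l$) this is a quadratic form in a vector with \emph{independent} sub-Gaussian coordinates, to which Hanson--Wright applies with $\|Q\|_F^2=N-m$, $\|Q\|_{\mathrm{op}}=1$, yielding exactly your tail bound. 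So the argument is salvageable, but drop the i.i.d.\ claim and make the Hanson--Wright step explicit.
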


The term $(ps_\veps)^{1/2}$ in the error rate above represents the overall sparsity of the true covariance matrix $\Sigma_\veps^0$. This dependence on sparsity level has also been noted in \citet{Rothman.etal:2008} and \citet{Xue.etal:2012} over slightly different matrix classes.
Notably, the estimation error rate does not depend on $M_\veps$ or on the exact values of $n_i$ for $i = 1, ..., m$. 
Instead, the effective sample size in $\lambda$ is $N^{1/2} - N^{-1/2} m$, which only depends on the total observation number $N$ and the number of subjects $m$. 

\textit{Remark 1.} When the number of subject $m$ is relatively small compared with the total number of observations $N$ in the scale of $m = o(N^{1/2})$, Theorem \ref{thm:max.r.c} implies that
$$
\left\|\widehat{\Sigma}_\veps^+-\Sigma^0_{\veps}\right\|_F
=O_P\left\{(ps_\veps N^{-1}\log p)^{1/2}\right\}, 
$$
where $X_n=O_P(a_n)$ means that for a set of random variables $X_n$ and a corresponding set of constants $a_n$, $X_n/a_n$ is bounded by a positive constant with probability approaching 1.
This rate coincides with those in \citet{Bickel:Levina:2008b, Rothman.etal:2008, Rothman.etal:2009, Cai:Liu:2011, Xue.etal:2012}, which are derived based on the assumption of independent and identically distributed observations.

\textit{Remark 2.} On the other hand, with $m = O(N)$, e.g., when the number of repeated measurements of each subject is bounded by a constant, Theorem \ref{thm:max.r.c} implies that  
$$
\left\|\widehat{\Sigma}_\veps^+-\Sigma^0_{\veps}\right\|_F=O_P\left\{(ps_\veps m^{-1}\log p)^{1/2})\right\}.
$$
In this scenario, $m$ plays the role of the effective sample size, and 
estimation consistency is achieved when $m$ approaches infinity.

\subsection{Estimation error rate for the between-subject covariance estimator}
\label{sec:error.g}

\begin{theorem}[Estimation Error Rate of $\widehat{\Sigma}_b^+$]\label{thm:max.g2.c}
Assume that the true between-subject covariance matrix $\Sigma^0_{b} \in \mathcal{U}(M_{b}, s_{b})$ and the true within-subject covariance matrix $\Sigma^0_{\veps} \in \mathcal{U}(M_{\veps}, s_{\veps})$. Let
\beqr
\lambda_b = C_1 \left(\frac{\log p}{m}\right)^{1/2} + C_2 \frac{\left( N\log p\right)^{1/2}}{(N - m) n^\ast} + \frac{M_b}{m} + \frac{M_\veps}{m n^\ast }
\label{eq:lambdab}
\eeqr
be the value of the tuning parameter $\lambda$ in \eqref{obj:admm0} for sufficiently large $C_1, C_2 > 0$, where
$n^\ast = m / \sum_{i = 1}^m n_i^{-1}$.
If $\log p\le m$, then the proposed between-subject estimator $\widehat{\Sigma}_b^+$ satisfies 
\beqrs
\left\|\widehat{\Sigma}_{b}^{+}-\Sigma^0_b\right\|_F\le10\lambda_b (p s_b)^{1/2} 
\eeqrs
with probability at least $1- 8p^{-C_3}$, where $C_3> 0$ only depends on $C_1$, $C_2$ and $\max(M_\veps,M_b)$. 
\end{theorem}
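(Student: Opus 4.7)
The strategy parallels the proof of Theorem \ref{thm:max.r.c}, but is more involved because the construction $\widehat{\Sigma}_b = \overline{\Sigma} - (1/n^\ast)\widehat{\Sigma}_\veps$ (using the identity $\sum_{i=1}^m (mn_i)^{-1} = 1/n^\ast$) mixes two random matrices. The plan is to first establish a max-norm concentration bound of the form $\|\widehat{\Sigma}_b - \Sigma_b^0\|_{\max}\le \lambda_b/2$ on a high-probability event, and then to convert this entrywise control into the stated Frobenius-norm bound via the same optimization argument used for Theorem \ref{thm:max.r.c}.

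For the max-norm step I would decompose
$$\widehat{\Sigma}_b - \Sigma_b^0 = \bigl(\overline{\Sigma} - E\overline{\Sigma}\bigr) - \frac{1}{n^\ast}\bigl(\widehat{\Sigma}_\veps - \Sigma_\veps^0\bigr),$$
since direct computation gives $E\overline{\Sigma} = \Sigma_b^0 + \Sigma_\veps^0/n^\ast$ and $E\widehat{\Sigma}_\veps = \Sigma_\veps^0$. The second summand is controlled by reusing the entrywise concentration for $\widehat{\Sigma}_\veps - \Sigma_\veps^0$ from the proof of Theorem \ref{thm:max.r.c}, divided by $n^\ast$; together with a small contribution from the empirical-mean correction inside $\widehat{\Sigma}_\veps$, this produces the $C_2(N\log p)^{1/2}/((N-m)n^\ast)$ and $M_\veps/(mn^\ast)$ terms in \eqref{eq:lambdab}. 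For the first summand, observe that $\overline{\Sigma}$ is the sample covariance of the independent sub-Gaussian vectors $\bar{\bY}_{i\cdot} = \bm{b}_i + \bar{\bm{\veps}}_{i\cdot}$, whose per-coordinate sub-Gaussian norm is at most $(M_b + M_\veps/n_i)^{1/2}$. I would split off the empirical-mean correction $\frac{m}{m-1}\bar{\bY}_{\cdot\cdot}\bar{\bY}_{\cdot\cdot}^{\T}$, apply a Bernstein inequality entrywise to the sub-exponential products $\bar{Y}_{i\cdot k}\bar{Y}_{i\cdot\ell} - E[\bar{Y}_{i\cdot k}\bar{Y}_{i\cdot\ell}]$, and union bound over the $p^2$ index pairs. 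Under $\log p\le m$ the Gaussian-like regime of Bernstein dominates and yields the leading $C_1(\log p/m)^{1/2}$ term; a sub-Gaussian tail bound on $\bar{\bY}_{\cdot\cdot}$ (whose coordinate variance is $O(M_b/m + M_\veps/(mn^\ast))$) produces the $M_b/m$ remainder.

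For the Frobenius step, optimality of $\widehat{\Sigma}_b^+$ in \eqref{obj:admm0} with $B=\widehat{\Sigma}_b$ and $\lambda=\lambda_b$, together with the feasibility of $\Sigma_b^0$ for sufficiently small $\delta > 0$, gives the basic inequality
$$\tfrac{1}{2}\|\widehat{\Sigma}_b^+ - \widehat{\Sigma}_b\|_F^2 + \lambda_b|\widehat{\Sigma}_b^+|_1 \le \tfrac{1}{2}\|\Sigma_b^0 - \widehat{\Sigma}_b\|_F^2 + \lambda_b|\Sigma_b^0|_1.$$
Setting $\Delta = \widehat{\Sigma}_b^+ - \Sigma_b^0$, expanding the squares, pairing the max-norm bound on $\widehat{\Sigma}_b - \Sigma_b^0$ with the entrywise $\ell_1$-norm of $\Delta$, restricting to the support $S$ of $\Sigma_b^0$ (of cardinality at most $ps_b$), and invoking $\sum_{(k,\ell)\in S}|\Delta_{k\ell}|\le (ps_b)^{1/2}\|\Delta\|_F$ by Cauchy--Schwarz, the same algebra as in the proof of Theorem \ref{thm:max.r.c} delivers $\|\Delta\|_F \le 10\lambda_b(ps_b)^{1/2}$.

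The main obstacle is the max-norm concentration for $\overline{\Sigma}$. The vectors $\bar{\bY}_{i\cdot}$ are independent but heterogeneously sub-Gaussian with parameters depending on $n_i$, so one cannot directly invoke a standard i.i.d.\ sample-covariance concentration result. Careful bookkeeping is needed so that the heterogeneity aggregates into the harmonic mean $n^\ast$ (rather than $\min_i n_i$) in the lower-order term $M_\veps/(mn^\ast)$, and so that the final constant depends on $M_b$ and $M_\veps$ only through $\max(M_\veps, M_b)$ as claimed. Coordinating the two concentration events for $\overline{\Sigma}$ and $\widehat{\Sigma}_\veps$ under a single union bound, and verifying that the combined failure probability is at most $8p^{-C_3}$, will also require some care.
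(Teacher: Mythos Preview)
Your proposal is correct and follows essentially the same route as the paper: it decomposes $\widehat{\Sigma}_b-\Sigma_b^0$ into the $(\overline{\Sigma}-E\overline{\Sigma})$ piece and the $(1/n^\ast)(\widehat{\Sigma}_\veps-\Sigma_\veps^0)$ piece, handles the first via Bernstein on the heterogeneous sub-Gaussian products $\bar{Y}_{i\cdot k}\bar{Y}_{i\cdot\ell}$ plus a separate bound on the empirical-mean correction, reuses the within-subject max-norm bound (Lemma~\ref{lem:max.r}) for the second, and then applies the same convex-optimization/sphere argument as in Theorem~\ref{thm:max.r.c} to pass to the Frobenius bound. One small accounting point: in the paper the terms $M_b/m$ and $M_\veps/(mn^\ast)$ arise as \emph{deterministic} leftovers from writing $\overline{\Sigma}$ with the $1/(m-1)$ normalization while centering each $W_{ik}W_{il}$ at $(\Sigma_b^0+n_i^{-1}\Sigma_\veps^0)_{k,l}$ (see the decomposition \eqref{eq:b2.dc.1}), not from the tail of $\bar{\bY}_{\cdot\cdot}$ or from inside $\widehat{\Sigma}_\veps$; the $\bar{\bY}_{\cdot\cdot}$ term is instead absorbed into the $C_1(\log p/m)^{1/2}$ contribution via Hoeffding. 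This does not affect the validity of your plan, only the bookkeeping of which summand generates which piece of $\lambda_b$.
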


Unlike the estimation error rate for $\widehat{\Sigma}_\veps$ in Theorem \ref{thm:max.r.c}, the rate for $\widehat{\Sigma}_b$ depends on the values of $n_i$'s via the term $n^\ast$. A simple upper bound is $n^\ast \geq \min_i n_i$, which implies that the second term in $\lambda_b$ converges to 0 at a rate that is at least not slower than $\lambda_\veps$ in Theorem \ref{thm:max.r.c}.
The rate in $\lambda_b$ is dominated by $(m^{-1}\log p)^{1/2}$.

Recall from \eqref{eq:naive_exp} that $\overline{\Sigma}$ has a bias of $\Sigma_\veps/n^\ast$ in estimating $\Sigma_b$. In practice, $\overline{\Sigma}$ has been misused to provide a sample estimate for subsequent regularized estimation \citep{Epskamp.etal:2018}. We establish the following estimation error rate for $\overline{\Sigma}^+$, which is defined as the solution to \eqref{obj:admm0} with input sample matrix $B = \overline{\Sigma}$, to illustrate that the bias in the sample estimate is carried over to the regularized estimation.

\begin{theorem}[Estimation Error Rate of $\overline{\Sigma}^+$]\label{thm:max.a.c}
Assume that the true between-subject covariance matrix $\Sigma^0_{b} \in \mathcal{U}(M_{b}, s_{b})$ and the true within-subject covariance matrix $\Sigma^0_{\veps} \in \mathcal{U}(M_{\veps}, s_{\veps})$. Let
\beqrs
\lambda_0 = C_1 \left(\frac{\log p}{m}\right)^{1/2} 
+ \frac{M_b}{m} +   \frac{M_\veps}{n^\ast}
\eeqrs
be the value of the tuning parameter $\lambda$ in \eqref{obj:admm0} for sufficiently large $C_1 > 0$, and the same $n^\ast$ defined in Theorem \ref{thm:max.g2.c}.
If $\log p\le m$, then the aggregated between-subject estimator $\overline{\Sigma}^+$ satisfies 
$$
\left\|\overline{\Sigma}^{+}-\Sigma^0_b\right\|_F\le10\lambda_0 (ps_b)^{1/2} 
$$
with probability at least $1- 4p^{-C_2}$, where $C_2> 0$ only depends on $C_1$ and $\max(M_\veps,M_b)$. 
\end{theorem}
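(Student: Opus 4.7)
The plan is to follow the same two-stage template that produces Theorems~1 and~2. The convex program \eqref{obj:admm0} defining $\overline{\Sigma}^+$ differs from the one producing $\widehat{\Sigma}_b^+$ only in the input matrix, so the KKT / basic-inequality argument used for Theorem~\ref{thm:max.g2.c} carries over verbatim: on the event $\{\|B-\Sigma_b^0\|_\infty \le \lambda\}$ (with $\|\cdot\|_\infty$ the entrywise max norm) the Frobenius error of the resulting estimator is at most $10\lambda(ps_b)^{1/2}$. I would invoke this deterministic oracle bound with $B=\overline{\Sigma}$ and $\lambda=\lambda_0$ without reproving it, so the whole task reduces to showing $\|\overline{\Sigma}-\Sigma_b^0\|_\infty \le \lambda_0$ with probability at least $1-4p^{-C_2}$.

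To bound that max-entry deviation, split
\[
\|\overline{\Sigma}-\Sigma_b^0\|_\infty \le \|\overline{\Sigma}-E(\overline{\Sigma})\|_\infty + \|E(\overline{\Sigma})-\Sigma_b^0\|_\infty .
\]
The second (bias) term equals $\|(1/n^\ast)\Sigma_\veps^0\|_\infty \le M_\veps/n^\ast$ by \eqref{eq:naive_exp}; this is the persistent piece that makes $\overline{\Sigma}^+$ inconsistent for $\Sigma_b^0$ when $n^\ast$ stays bounded, and it supplies the $M_\veps/n^\ast$ summand in $\lambda_0$. For the stochastic term, view $\bar{\bY}_{1\cdot},\ldots,\bar{\bY}_{m\cdot}$ as independent mean-zero sub-Gaussian vectors with covariance $\Sigma_b^0 + n_i^{-1}\Sigma_\veps^0$ (diagonals bounded by $M_b+M_\veps$) and write
\[
\overline{\Sigma} = \frac{1}{m-1}\sum_{i=1}^m \bar{\bY}_{i\cdot}\bar{\bY}_{i\cdot}^\T - \frac{m}{m-1}\,\bar{\bm Y}\bar{\bm Y}^\T, \qquad \bar{\bm Y}=\frac{1}{m}\sum_{i=1}^m\bar{\bY}_{i\cdot}.
\]
A Bernstein inequality for the sub-exponential products $(\bar{\bY}_{i\cdot})_k(\bar{\bY}_{i\cdot})_\ell$ combined with a union bound over the $p^2$ entries controls the first sum to within $C_1(\log p/m)^{1/2}$ of its expectation when $\log p\le m$; the quadratic mean-correction $\bar{\bm Y}\bar{\bm Y}^\T$ is entrywise $O_P((M_b+M_\veps)/m)$ by a matching sub-Gaussian tail bound for $\bar{\bm Y}$, producing the $M_b/m$ summand (the $M_\veps/(mn^\ast)$ portion is absorbed into the larger $M_\veps/n^\ast$). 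Combining these with the bias bound gives $\|\overline{\Sigma}-\Sigma_b^0\|_\infty \le \lambda_0$ on an event of probability at least $1-4p^{-C_2}$.

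The only real work is the sub-exponential concentration for entries of $m^{-1}\sum_i\bar{\bY}_{i\cdot}\bar{\bY}_{i\cdot}^\T$, and this is exactly the lemma that already drove the stochastic steps in Theorems~\ref{thm:max.r.c} and~\ref{thm:max.g2.c}; the bias is a one-line computation and the oracle inequality is inherited. Unlike Theorem~\ref{thm:max.g2.c}, no correction involving $\widehat{\Sigma}_\veps$ enters the estimator here, so only two stochastic events need to be controlled (the off-diagonal and diagonal concentrations of the quadratic form), which is precisely why the failure probability drops from $8p^{-C_3}$ to $4p^{-C_2}$ and why the $C_2(N\log p)^{1/2}/\{(N-m)n^\ast\}$ and $M_\veps/(mn^\ast)$ terms present in $\lambda_b$ are absent from $\lambda_0$ — at the price of the irreducible bias $M_\veps/n^\ast$.
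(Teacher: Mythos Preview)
Your proposal is correct and matches the paper's approach: Lemma~\ref{lem:max.a} bounds $\max_{k,l}|(\overline{\Sigma}-\Sigma_b^0)_{k,l}|$ via exactly the decomposition you describe (Bernstein on the centered products $W_{ik}W_{il}$, Hoeffding on the mean-correction $(\bar W_k)(\bar W_l)$, plus the deterministic residuals $M_b/m$ and $M_\veps/n^\ast$), after which Theorem~\ref{thm:max.a.c} follows from the same oracle inequality used for Theorem~\ref{thm:max.r.c}. The only cosmetic difference is bookkeeping: the paper obtains the $M_b/m$ summand from the explicit residual $(\Sigma_b^0)_{k,l}/(m-1)$ in the expansion of $\overline{\Sigma}$ (not from the mean-correction, which it absorbs into the $C_1(\log p/m)^{1/2}$ piece), but this does not affect the argument.
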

The upper bound of the estimation error rate in $\overline{\Sigma}^+$ is strictly larger than that of $\widehat{\Sigma}_b^+$ due to the dominant term $M_\veps / n^\ast$ in $\lambda_0$, which corresponds to the bias in \eqref{eq:naive_exp}. For example, in the balanced setting where $n_i = n_1$ for all $i = 1, \ldots, m$, it holds that $n_\ast = n_1$ and this bias term $M_\veps / n_1$ does not vanish even if $m \rightarrow \infty$ as long as $n_1 = O(1)$. We also show that $\overline{\Sigma}^+$ is inconsistent in estimating $\Sigma_\veps$ in Theorem \ref{thm:max.ae.c} in Appendix \ref{sec:proofs}.

In some scenarios, the estimation of between-subject and within-subject correlation matrices, instead of covariance matrices, is of interest and can be obtained similarly in the proposed framework. We provide estimation error rates of the two sparse positive definite estimators in Appendix~\ref{sub:cor}. 

\section{Comparison between two unbiased estimators of $\Sigma_b$}
\label{sec:compare}
We consider a commonly used unbiased estimator of $\Sigma_b$ based on the multivariate analysis of variance \citep{Rao:Heckler:1998}:
\beqr
    \label{eq:Sigmatilde}
    \widetilde{\Sigma}_{b}= \frac{1}{n_0} \left\{ \sum_{i=1}^m \frac{n_i}{m - 1}(\bar{\bY}_{i\cdot}-\bar{\bY}_{\cdot\cdot})(\bar{\bY}_{i\cdot}-\bar{\bY}_{\cdot\cdot})^\T - \widehat{\Sigma}_\veps \right\}, \text{where  } 
    n_0= \frac{N- N^{-1}\sum_{i=1}^mn_i^2}{m-1},
\eeqr
$\bar{\bY}_{i \cdot}= n_i^{-1}\sum_{j=1}^{n_i}\bY_{ij}$,  $\bar{\bY}_{\cdot\cdot}=N^{-1}\sum_{i=1}^m\sum_{j=1}^{n_i}\bY_{ij}$, and $N = \sum_{i = 1}^m n_i$.

It is straightforward to show that $\mathrm{E} (\widetilde{\Sigma}_b) = \Sigma_b$. However, just like $\widehat{\Sigma}_b$ in \eqref{eq:sample}, the diagonal elements of $\widetilde{\Sigma}_b$ could be negative, which is undesirable for an estimate of $\Sigma_b$. Specifically, in the setting where $\bm{b}_i$ and $\bm{\veps}_{ij}$ follow Gaussian distributions and $n_i$'s are all equal, it can be shown that $\text{pr}\{(\widetilde{\Sigma}_{b})_{k,k}<0\}$ decreases with $(\Sigma^0_{b})_{k,k}/(\Sigma_{\veps}^0)_{k,k}$. An adjustment for negative diagonal values of $\widetilde{\Sigma}_b$ is proposed in \citet{Rao:Heckler:1998} based on the assumption that $\widehat{\Sigma}_\veps$ is positive definite, which is violated in the high-dimensional settings.

We demonstrate an additional limitation of using $\widetilde{\Sigma}_{b}$, in comparison with $\widehat{\Sigma}_b$, in obtaining a sparse positive definite estimate of $\Sigma_b$. Define $\widetilde{\Sigma}_b^+$ as a solution of \eqref{obj:admm0} with $B = \widetilde{\Sigma}_b$. The following theorem shows that the performance of $\widetilde{\Sigma}_b^+$ hinges on the data imbalance.
\begin{theorem}[Estimation Error Rate of $\widetilde{\Sigma}_b^+$]\label{thm:max.g.c}
Assume that the true between-subject covariance matrix $\Sigma^0_{b} \in \mathcal{U}(M_{b}, s_{b})$ and the true within-subject covariance matrix $\Sigma^0_{\veps} \in \mathcal{U}(M_{\veps}, s_{\veps})$. Let
$$
\widetilde{\lambda}_b = C_1 \frac{\max_i n_i}{n_0} \left(\frac{\log p}{m}\right)^{1/2} + C_2 \frac{\left( N\log p\right)^{1/2}}{n_0 (N - m)}
+ {\frac{(2N-n_0m)M_b}{2n_0 m}} + \frac{M_\veps}{n_0 m}
$$
be the value of the tuning parameter $\lambda$ in \eqref{obj:admm0} for sufficiently large $C_1, C_2 > 0$.
If $\log p\le m$, then $\widetilde{\Sigma}_b^+$ satisfies 
$$
\left\|\widetilde{\Sigma}_{b}^{+} - \Sigma^0_b\right\|_F \le 10 \widetilde{\lambda}_b (ps_b)^{1/2} 
$$
with probability at least $1- 8p^{-C_3}$, where $C_3> 0$ only depends on $C_1$, $C_2$ and $\max(M_\veps,M_b)$. 
\end{theorem}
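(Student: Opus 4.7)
The plan is to adopt the reduction strategy that underlies Theorems~\ref{thm:max.g2.c} and \ref{thm:max.a.c}: the deterministic lemma from \citet{Xue.etal:2012} (the one that powers those two results) converts any entry-wise maximum deviation bound $\|B - \Sigma^0\|_\infty \leq \lambda$ of the input matrix $B$ from the truth $\Sigma^0 \in \mathcal{U}(M,s)$ into the Frobenius guarantee $\|\widehat{\Sigma} - \Sigma^0\|_F \leq 10\lambda(ps)^{1/2}$ for the positive-definite solution of \eqref{obj:admm0}. The task therefore reduces to showing $\|\widetilde{\Sigma}_b - \Sigma_b^0\|_\infty \leq \widetilde{\lambda}_b$ on an event of probability at least $1 - 8p^{-C_3}$.

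To do so, I rewrite the weighted sum-of-squares via the identity $\sum_i w_i(\bm{x}_i - \bar{\bm{x}})(\bm{x}_i - \bar{\bm{x}})^\T = \sum_i w_i \bm{x}_i\bm{x}_i^\T - (\sum_i w_i)\bar{\bm{x}}\bar{\bm{x}}^\T$ with weights $w_i = n_i/(m-1)$ and $\bar{\bm{x}} = \bar{\bY}_{\cdot\cdot}$, and use the unbiasedness of $\widetilde{\Sigma}_b$ and $\widehat{\Sigma}_\veps$ (so that the expectations recombine to $n_0\Sigma_b^0 + \Sigma_\veps^0$) to obtain the three-piece decomposition
\begin{align*}
n_0\bigl(\widetilde{\Sigma}_b - \Sigma_b^0\bigr)
&= \sum_{i=1}^m \frac{n_i}{m-1}\bigl\{\bar{\bY}_{i\cdot}\bar{\bY}_{i\cdot}^\T - \mathrm{E}(\bar{\bY}_{i\cdot}\bar{\bY}_{i\cdot}^\T)\bigr\} \\
&\quad - \frac{N}{m-1}\bigl\{\bar{\bY}_{\cdot\cdot}\bar{\bY}_{\cdot\cdot}^\T - \mathrm{E}(\bar{\bY}_{\cdot\cdot}\bar{\bY}_{\cdot\cdot}^\T)\bigr\} - \bigl(\widehat{\Sigma}_\veps - \Sigma_\veps^0\bigr).
\end{align*}
Each $\bar{\bY}_{i\cdot}$ is sub-Gaussian with variance proxy at most $M_b + M_\veps/n_i$, and $\bar{\bY}_{\cdot\cdot}$ is sub-Gaussian with variance proxy at most $N^{-2}\sum_i n_i^2 M_b + M_\veps/N$.

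Each piece contributes one group of terms in $\widetilde{\lambda}_b$. For the first, entries are sums of $m$ independent centered sub-exponential variables on scale $\{n_i/(m-1)\}(M_b + M_\veps/n_i)$; a Bernstein-type bound combined with a union bound over $p^2$ entries yields, after division by $n_0$, the leading term $(C_1 \max_i n_i/n_0)(\log p/m)^{1/2}$. For the second, a single-shot sub-Gaussian tail bound on $\bar{\bY}_{\cdot\cdot}$ together with a deterministic bound on $\mathrm{E}(\bar{\bY}_{\cdot\cdot}\bar{\bY}_{\cdot\cdot}^\T)$ and the algebraic identity $\sum_i n_i^2 = N\{N - n_0(m-1)\}$ produces the $(2N-n_0m)M_b/(2n_0m)$ and $M_\veps/(n_0m)$ terms. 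For the third, the concentration bound for $\widehat{\Sigma}_\veps$ already established in the proof of Theorem~\ref{thm:max.r.c} gives $\|\widehat{\Sigma}_\veps - \Sigma_\veps^0\|_\infty \leq C(N\log p)^{1/2}/(N-m)$ with probability at least $1 - 4p^{-c}$, which after division by $n_0$ yields the second term of $\widetilde{\lambda}_b$. A union bound over the three high-probability events delivers the stated probability $1 - 8p^{-C_3}$.

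The main obstacle is the weighted Bernstein argument for the first piece. Because $\widetilde{\Sigma}_b$ places the non-uniform weight $n_i/(m-1)$ on the $i$-th subject, the per-summand sub-exponential scale grows with $\max_i n_i$ rather than with a harmonic-type quantity such as $n^\ast$, which is what controls $\widehat{\Sigma}_b$ in Theorem~\ref{thm:max.g2.c}. Since $\max_i n_i/n_0 \geq 1$ always, with equality only in the balanced design $n_i = n$, this is precisely the algebraic mechanism behind the $(\max_i n_i/n_0)$ prefactor and behind the phenomenon that $\widetilde{\Sigma}_b^+$ can be strictly worse than $\widehat{\Sigma}_b^+$ under imbalance. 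Once this weighted Bernstein bound is in hand, the remaining two pieces follow the template already used for Theorems~\ref{thm:max.g2.c} and \ref{thm:max.a.c}.
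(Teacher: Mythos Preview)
Your proposal is correct and follows essentially the same route as the paper's proof (Lemma~\ref{lem:max.g} combined with the deterministic Frobenius lemma used for Theorem~\ref{thm:max.r.c}). Your three-piece decomposition is identical to the paper's once one expands expectations: the paper writes the weighted between-group sum of squares, subtracts $(n_0\Sigma_b^0+\Sigma_\veps^0)$, and isolates the same three contributions---a weighted Bernstein term on $\sum_i\{n_iW_{ik}W_{il}-(n_i\Sigma_b^0+\Sigma_\veps^0)_{k,l}\}$, a Hoeffding term on $\sum_i n_i W_{ik}$ plus the deterministic leftover yielding the $M_b$ and $M_\veps$ pieces via the same identity $\sum_i n_i^2=N\{N-n_0(m-1)\}$, and Lemma~\ref{lem:max.r} for $\widehat{\Sigma}_\veps-\Sigma_\veps^0$. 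One minor bookkeeping point: the deterministic lemma of \citet{Xue.etal:2012} gives factor $5$ from $\|B-\Sigma^0\|_\infty\le\lambda$; the paper obtains the factor $10$ in the statement because Lemma~\ref{lem:max.g} establishes $\|\widetilde{\Sigma}_b-\Sigma_b^0\|_\infty\le2\widetilde{\lambda}_b$, so you should target the bound $2\widetilde{\lambda}_b$ rather than $\widetilde{\lambda}_b$ when assembling the pieces.
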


We define a measure of data imbalance as $\max_i n_i / n_0 \geq 1$, where $n_0$ is defined in \eqref{eq:Sigmatilde}. 
In the balanced dataset where all $n_i$'s are equal, we have $ \max_i n_i / n_0 = 1$ and the two estimators coincide $\widehat{\Sigma}_b^+=\widetilde{\Sigma}_b^+$. This equivalence is also reflected by the same estimation error rate since $\lambda_b = \widetilde{\lambda}_b$. When $n_i$'s are not all equal, the imbalance $\max_i n_i / n_0 > 1$ increases with $\max_i n_i$ for fixed $m$ and $N$. Comparing the first term in $\lambda_b$ and $\widetilde{\lambda}_b$, the estimation error rate of $\widetilde{\Sigma}_b^+$ in the dimension $p$ is strictly worse than that of $\widehat{\Sigma}_b^+$, which does not depend on the imbalance of the dataset. We numerically verify this comparison in Section \ref{sec:numerical}, and demonstrate that the practical performance of $\widetilde{\Sigma}_b^+$ could be very sensitive to the imbalance of the data.

\section{Simulation studies}
\label{sec:numerical}
\subsection{General settings}
\label{sec:crrtiria}
In this section, we evaluate the numeric performance of our proposed estimators $\widehat{\Sigma}_\veps^+$ (for the within-subject covariance $\Sigma_\veps$) and $\widehat{\Sigma}_b^+$ (for the between-subject covariance $\Sigma_b$), and compare with $\overline{\Sigma}^+$ (in estimating either $\Sigma_b$ or $\Sigma_\veps$) and $\widetilde{\Sigma}_b^+$ (in estimating $\Sigma_b$).

In each of the subsequent subsections, we generate observations $\bY_{ij}$ from model \eqref{eq:simpleY}, where $\bm{b}_i\sim N(\bm{0},\Sigma_b^0)$ and $\bm{\veps}_{ij}\sim N(\bm{0},\Sigma_\veps^0)$. All estimators in comparison are defined as solutions to the optimization problem \eqref{obj:admm0} with corresponding input sample covariance matrices. We use a $5$-fold cross-validation procedure to select the optimal tuning parameter value $\lambda$ in \eqref{obj:admm0} for each problem. We refer to Appendix~\ref{sec:tunning} for the details of the cross-validation procedure.

\begin{figure}
\centering
\includegraphics[width=0.75\textwidth]{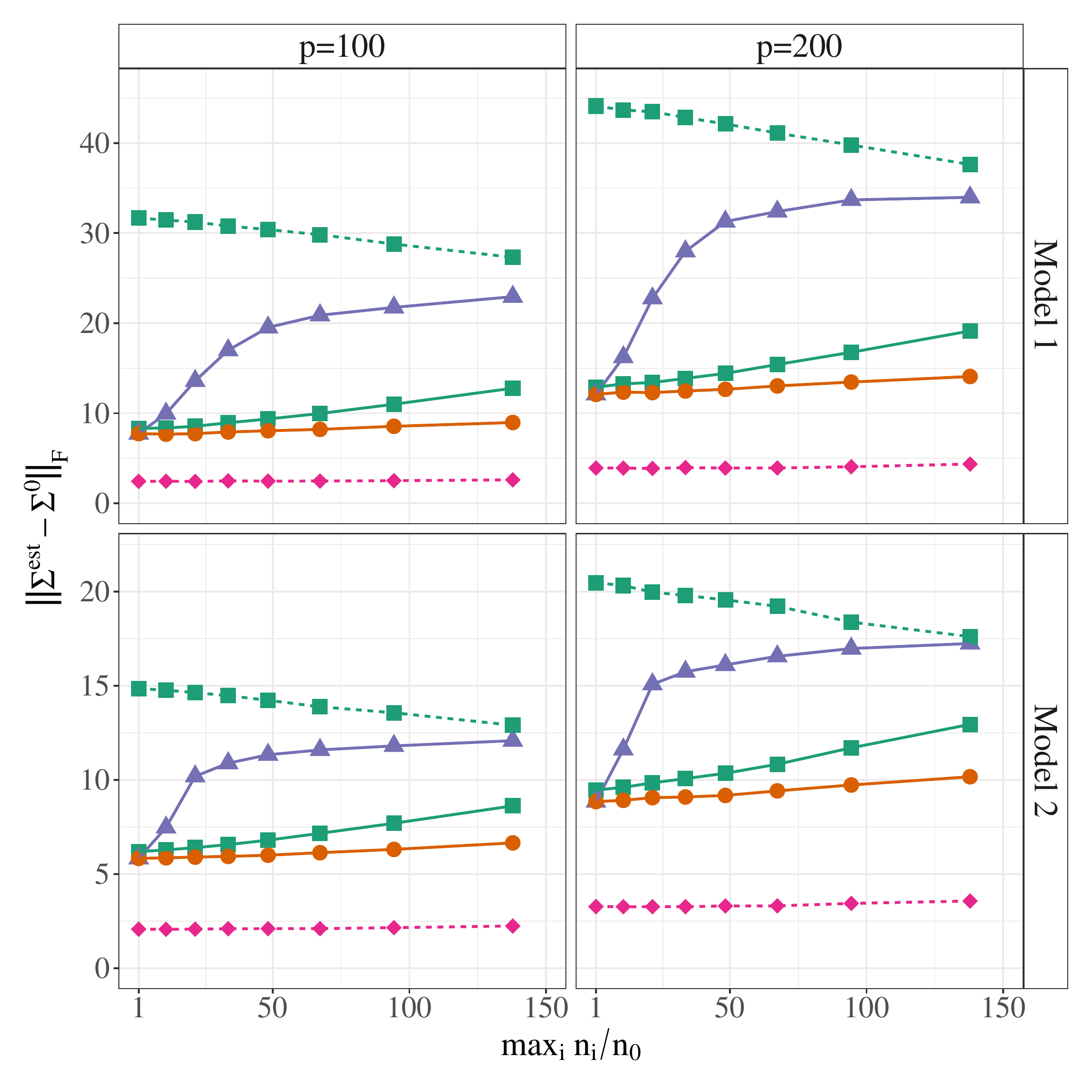}
\caption{Estimation error (in Frobenius norm, averaged over 100 replicates) for two between-subject (solid) and one within-subject (dash) covariance matrix estimator: $\widetilde{\Sigma}_b^+$ (violet triangle), $\widehat{\Sigma}_b^+$ (orange circle), and $\widehat{\Sigma}_\veps^+$ (pink diamond). The estimation error of the aggregated estimator ($\overline{\Sigma}^+$, green square) is evaluated in estimating the within-subject (dash) and the between-subject (solid) covariance matrices. The $x$-axis is $\max_in_i / n_0$, which characterizes the imbalance of the data.}
\label{fig:ferror}
\end{figure}

\subsection{General comparison}
In Sections \ref{sec:theo} and \ref{sec:compare}, we have shown that the error rates of the estimators depend on various factors: the number of subjects $m$, the total number of observations $N$, the ambient dimension $p$, and for $\widetilde{\Sigma}_b^+$ the data imbalance, i.e., $\max_in_i/n_0$. To numerically verify the established theoretical results, we consider the following models:
\begin{model}\label{model1} Banded matrices with bandwidth $10$: set $(\Sigma_b^0)_{j,k}= (1-|j-k|/10)_+$ and $(\Sigma_\veps^0)_{j, k} = (-1)^{|k_1-k_2|}(1-|k_1-k_2|/10)_+$;
\end{model}

\begin{model}\label{model2}
Covariance matrices corresponding to an \textsc{ar}$(1)$ series: set $(\Sigma_b^0)_{j, k} = 0.6^{|j - k|}$ and $(\Sigma_\veps^0)_{j, k} = (-0.6)^{|j - k|}$;
\end{model}
with $p=100$ and $p=200$. We note that the same covariance structures had been used in \citet{Bickel:Levina:2008a, Rothman:2012, Xue.etal:2012, Cui.etal:2016}.
In each setting, we let $N = 1000$ and $m = 100$. Furthermore, to study the effect of data imbalance on the estimation error, we set $n_i= a$ for $i=1,2,\dots,99$, where $a=\{3,4,\cdots,10\}$, and $n_{100} = N-99a$. By doing so, we generate settings where the measure of data imbalance, $\max_in_i /n_0$, varies.

Fig.~\ref{fig:ferror} summarizes the estimation error in the Frobenius norm averaged over 100 replications. We present the performance of four estimators: the proposed within-subject estimator $\widehat{\Sigma}_\veps^+$ for estimating $\Sigma_\veps^0$, and three between-subject estimators $\widehat{\Sigma}_{b}^+$ (our proposed method), $\widetilde{\Sigma}_{b}^+$ (the ANOVA type estimator), and $\overline{\Sigma}^+$ (the aggregated estimator) for estimating either $\Sigma_b^0$ or $\Sigma_\veps^0$. Among the three between-subject estimators, our proposed method $\widehat{\Sigma}_b^+$ achieves the lowest estimation error in all simulation settings. Furthermore, being consistent with the results in Theorem \ref{thm:max.g2.c}, Theorem \ref{thm:max.a.c} and Theorem \ref{thm:max.g.c}, the performance of $\widehat{\Sigma}_b^+$ and $\overline{\Sigma}^+$ are much less sensitive to the data imbalance $\max_in_i / n_0$ while the error of $\widetilde{\Sigma}_b^+$ dramatically increases as the data become less balanced. 
Surprisingly, in all but the perfectly balanced case ($\max_in_i  / n_0 = 1$), we observe that $\widetilde{\Sigma}^+_b$, which is built on the unbiased sample estimate \eqref{eq:Sigmatilde}, performs much worse than $\overline{\Sigma}^+$ which is built on the biased $\overline{\Sigma}$ in \eqref{eq:naive_est}. 
This suggests the dominating role of data imbalance in the estimation error of $\widetilde{\Sigma}_b^+$. 
Our proposed method $\widehat{\Sigma}_\veps^+$ also achieves much lower estimation errors than $\overline{\Sigma}^+$ in estimating within-subject covariance in all simulation settings.
The decreasing error of $\overline{\Sigma}^+$ in estimating $\Sigma_\veps^0$ is consistent with Theorem \ref{thm:max.ae.c} in Appendix \ref{sec:proofs}., which states that the error rate of $\|\overline{\Sigma}^+-\Sigma_\veps^0\|_F$ is inversely proportional to the imbalance score $\max_i n_i/n_0$.

\begin{figure}
\centering
\includegraphics[width=0.95\textwidth]{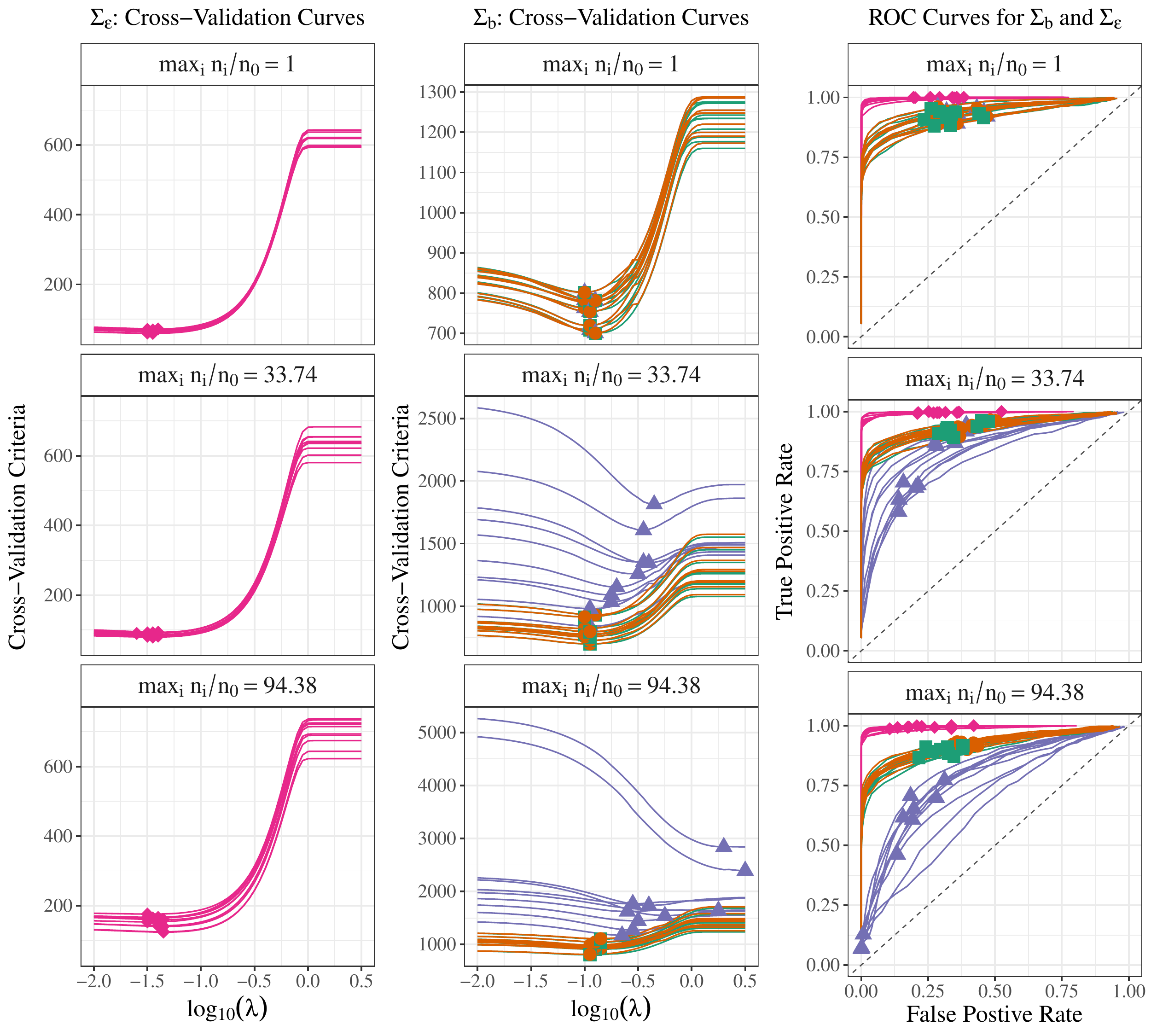}
\caption{Cross-validation curves and receiver operating characteristic (ROC) curves between-subject and within-subject covariance sparsity recovery in Model \ref{model1} with $p=100$ and different values of $\max_in_i / n_0$. The top, middle and bottom rows correspond to different levels of data imbalance (with $a=10\text{, }7$, and $4$, respectively).  For simplicity of presentation, we randomly select 10 out of the 100 replicates. The left and middle panels exhibit 5-fold cross-validation curves of $\widehat{\Sigma}_\veps^+$ (pink) for within-subject covariance, $\widehat{\Sigma}_b^+$ (orange), $\widetilde{\Sigma}_b^+$ (violet), and $\overline{\Sigma}^+$ (green) for between-subject covariance. Diamonds ($\widehat{\Sigma}_\veps^+$), circles ($\widehat{\Sigma}_b^+$), triangles ($\widetilde{\Sigma}_b^+$), and squares ($\overline{\Sigma}^+$) in these two panels mark the minimum points on these curves. The right panels present the ROC curves. The diamonds ($\widehat{\Sigma}_\veps^+$), circles ($\widehat{\Sigma}_b^+$), triangles ($\widetilde{\Sigma}_b^+$), and squares ($\overline{\Sigma}^+$) represent the true positive rate and false positive rate with $\lambda$ values selected by the 5-fold cross-validation.}
\label{fig:model2_roc}
\end{figure}

To demonstrate the effectiveness of regularization, in Fig.~\ref{fig:model2_roc}, we present the cross-validation curves and the receiver operating characteristic (ROC) of the sparsity recovery of these estimators in Model \ref{model1} with $p=100$ and under three different levels of data imbalance. The optimal values of $\lambda$ for $\widehat{\Sigma}_{\veps}^+$, $\widehat{\Sigma}_{b}^+$, and $\overline{\Sigma}^+$ are relatively stable across different levels of data imbalance, while the optimal value of $\lambda$ for $\widetilde{\Sigma}_{b}^+$ sharply fluctuates and generally increases with $\max_in_i / n_0$. This indicates that large values of $\max_in_i / n_0$ tend to result in more shrinkage of the off-diagonal entries in $\widetilde{\Sigma}_{b}^+$ towards $0$. This observation is aligned with the larger error of $\widetilde{\Sigma}_b^+$ in Frobenius norm in Fig.~\ref{fig:ferror} for large values of $\max_in_i / n_0$. 

While the theoretical guarantees of support recovery would be an interesting and challenging problem for future research, we observe numerically that the data imbalance seems not to affect the support recovery performance of $\widehat{\Sigma}_{\veps}^+$, $\widehat{\Sigma}_{b}^+$, and $\overline{\Sigma}^+$, which is an established favorable properties of these estimators in terms of estimation error. In contrast, just as in estimation error, $\widetilde{\Sigma}_{b}^+$ suffers in sparsity recovery performance from the data imbalance.

\subsection{Understanding the effects of the bias in sample estimates}
As seen in Fig.~\ref{fig:ferror} and Fig.~\ref{fig:model2_roc}, the estimator $\overline{\Sigma}^+$ based on the biased sample estimate $\overline{\Sigma}$ surprisingly has relatively acceptable numerical performance.
This subsection investigates this observation by comparing our proposed between-subject estimator $\widehat{\Sigma}_b^+$ with $\overline{\Sigma}^+$. We consider two modifications of Model \ref{model1} as follows:

\begin{model}\label{model3}
For any given $a > 0$, we set
$(\Sigma_b^0)_{j, k} =(1-|j - k|/10)_+$ and 
$(\Sigma_\veps^0)_{j, k} = a(1-|j -k |/10)_+$.
\end{model}

\begin{model}\label{model4}
For any given $a > 0$, we set
$(\Sigma_b^0)_{j, k} =(1-|j - k|/10)_+$ and 
$(\Sigma_\veps^0)_{j, k} = a(-1)^{|j -k |}(1-|j -k |/10)_+$.
\end{model}
From \eqref{eq:naive_exp}, the matrix of $\Sigma_\veps$ can be considered as the noise for estimating $\Sigma_b$. We thus define the inverse signal-to-noise ratio as $|\Sigma_\veps^0|_\infty/|\Sigma_b^0|_\infty$. By varying $|\Sigma_\veps^0|_\infty/|\Sigma_b^0|_\infty=a\in \{1, 2, \ldots, 10\}$ in Model~\ref{model3} and Model~\ref{model4},  we construct settings where the relative signal strength from $\Sigma_\veps$ and $\Sigma_b$ is different. In comparison with Model \ref{model3}, we alternate the signs of sub-diagonal elements in $\Sigma_\veps^0$ in Model \ref{model4}. In both models, we generate balanced data with $n_i=5$ for $i = 1, \ldots, m=100$ and $p=50$. Estimation errors in Frobenius norm are summarized (over 100 replications) in Fig.~\ref{fig:comparsion2}.

\begin{figure}
\centering
\includegraphics[width=0.75\textwidth]{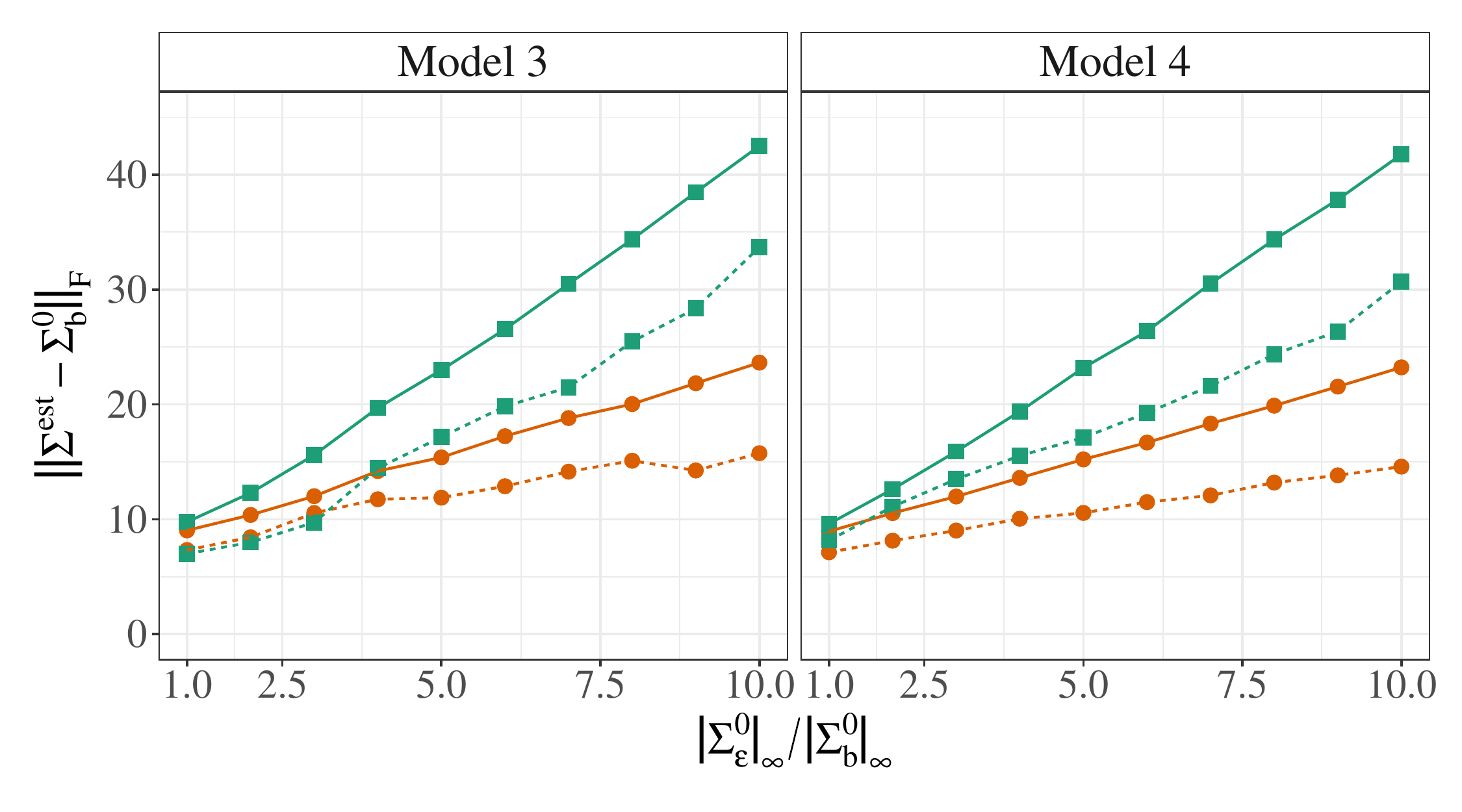}
\caption{Estimation error (in Frobenius norm, averaged over 100 replicates) of the two between-subject sample covariance (solid) estimators ($\overline{\Sigma}$ and $\widehat{\Sigma}_b$) and their corresponding sparse and positive definite (dash) covariance estimators ($\overline{\Sigma}^+$ and $\widehat{\Sigma}_b^+$). The horizontal axis is the inverse signal-to-noise ratio, i.e., $|\Sigma_\veps^0|_\infty/|\Sigma_b^0|_\infty$. The estimation errors of $\overline{\Sigma}$ and $\overline{\Sigma}^+$ are marked in green, and the estimation errors of $\widehat{\Sigma}_b$ and $\widehat{\Sigma}_b^+$ are marked in orange.}
\label{fig:comparsion2}
\end{figure}

In general, our proposed between-subject sample estimate $\widehat{\Sigma}_{b}$ significantly outperforms $\overline{\Sigma}$ in both examples. This demonstrates the effect of the bias-correction as in \eqref{eq:naive_exp}. Moreover, for both sample estimators, their regularized versions (dash lines) achieve lower estimation errors, indicating the benefit of regularization. 

Surprisingly, as $|\Sigma_\veps^0|_\infty/|\Sigma_b^0|_\infty$ gets relatively small, $\overline{\Sigma}^+$ achieves an even smaller estimation error than $\widehat{\Sigma}_{b}^+$. This is an interesting cancellation of two biases with opposite signs: the estimation bias in the sample estimate $\overline{\Sigma}$ and the shrinkage bias in the $\ell_1$-penalty. 
Specifically, for any index pair $(j, k)$, \eqref{eq:naive_exp} indicates that the bias of $\overline{\Sigma}_{j, k}$ in estimating $(\Sigma^0_b)_{j, k}$ is $\sum_{i} (m n_i)^{-1} (\Sigma_\veps^0)_{j, k}$. In cases where $(\Sigma^0_\veps)_{j,k}$ and $(\Sigma^0_b)_{j, k}$ have the same signs (as in Model \ref{model3}), this sample estimation bias has the opposite effect from the shrinkage bias from the $\ell_1$ penalty. Consequently, these two biases could cancel each other when they have similar magnitudes, which is achieved when $(\Sigma^0_\veps)_{j, k}$ is on the similar scale as $\lambda$, and thus resulting in the surprisingly better performance of $\widehat{\Sigma}_b$ than $\widehat{\Sigma}_b^+$. Notably, when the estimation bias (as characterized by $|\Sigma_\veps^0|_\infty/|\Sigma_b^0|_\infty$) is too large to be canceled by the shrinkage bias, or when both biases have same signs (as in Model \ref{model4}), the performance of $\widehat{\Sigma}_b^+$ is dominating that of $\overline{\Sigma}^+$.

\section{Covariance Graphs of Clinical Variables from Hemodialysis Patients}
\label{sec:realdata}

We apply our proposed methods to estimate the between-subject and within-subject covariance structures among some clinical variables collected from hemodialysis patients. Hemodialysis is a treatment that filters wastes and fluid from patients' blood when the kidneys no longer function well. Hemodialysis patients usually follow a strict schedule by visiting a dialysis center about three times a week. Clinical variables, such as blood pressure and pulse, are measured during each treatment. Since numerous metabolic changes accompanying impaired kidney function affect all organ systems of the human body, it is imperative to study correlations among clinical variables. Those clinical variables are measured repeatedly for each hemodialysis patient at each treatment. We will investigate correlation structures at the patient (between-patient) and treatment (within-patient) levels. 

We use a dataset of measurements of several clinical and laboratory variables during 2018 and 2021 from 5,000 hemodialysis patients. For homogeneity, we consider white, non-diabetic, and non-Hispanic male patients who never had a COVID-19-positive polymerase chain reaction test. We use the measurements starting from the second year to avoid large fluctuations in the first year of dialysis. The dataset contains 276 patients with at least three complete treatment records every 30 days. The data imbalance is $\max_in_i / n_0 = 2.54$. For simplicity, we focus on the relationships among interdialytic weight gain, blood pressure, and heart rate. Based on \cite{Ipema.etal:2016}, we consider the following eight variables: \texttt{idwg} (interdialytic weight gain, kg), \texttt{ufv} (ultrafiltration volume, L),   \texttt{min\_sbp} (minimum systolic blood pressure, mmHg), \texttt{min\_dbp} (minimum diastolic blood pressure, mmHg), \texttt{max\_sbp} (maximum systolic blood pressure, mmHg), \texttt{max\_dbp} (maximum diastolic blood pressure, mmHg), \texttt{min\_pulse} (minimum pulse, beats/min), and \texttt{max\_pulse} (maximum pulse, beats/min). In our analysis, \texttt{ufv} is set to be the difference between predialysis and postdialysis weight within a hemodialysis session. 

We are interested in recovering the correlation structures at the patient and the treatment levels. Estimating the correlation matrix corresponds to recovering the correlation graph, where the nodes represent the random variables of interest and the edges present the marginal correlation between the nodes \citep{Chaudhuri.etal:2007}. We apply our method to repeated clinical measurements from these 276 patients. The regularization parameters are chosen by $5$-fold cross-validation with the one standard error rule \citep{Hastie.etal:2009}. Fig.~\ref{fig:graphs} presents estimates of the within-subject (left panel) and between-subject (middle panel) correlations, which indeed present different correlation structures. We also include the estimate using the aggregated data (right panel) for comparison, which coincides with our between-subject estimate. This is consistent with Theorem~\ref{thm:max.a.c} for this dataset's small value of $\max_in_i/n_0$.

\begin{figure}
\centering
\includegraphics[width=0.95\textwidth]{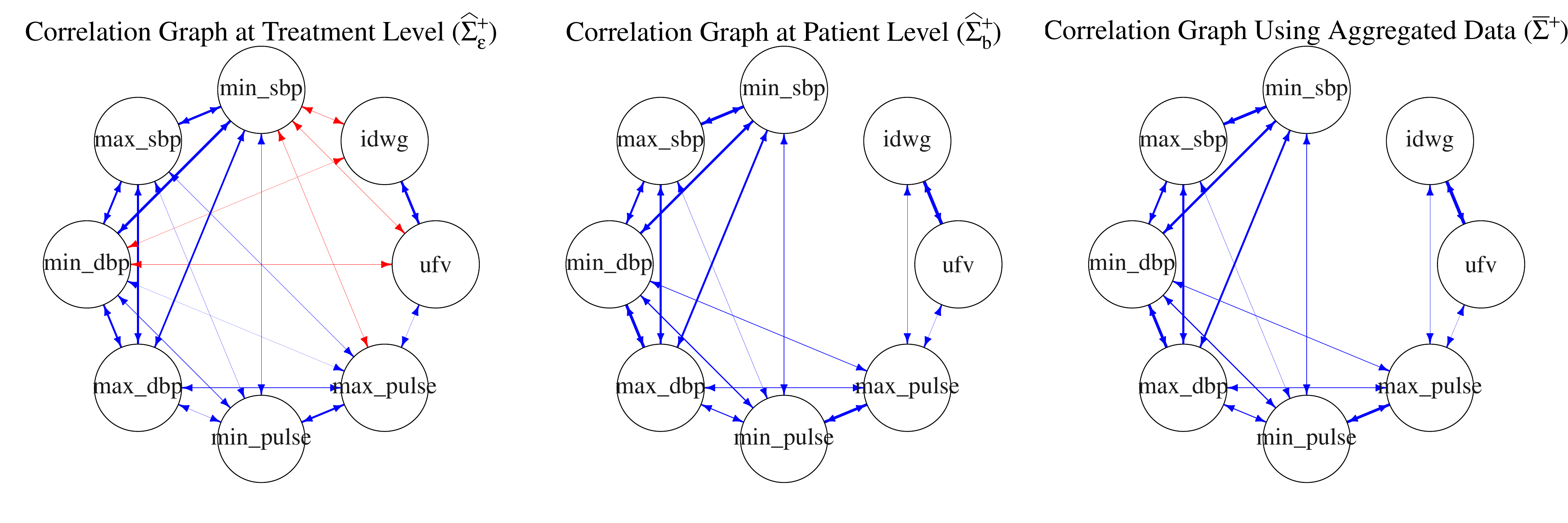}
\caption{Within-subject (left), between-subject (middle) correlation graphs and correlation graph using the aggregated data (right) for clinical variables from hemodialysis patients. We present correlation matrices with the convention of using bi-directed covariance graphs \citep{Chaudhuri.etal:2007}. The blue edges correspond to the positive correlations, while the red edges represent the negative correlations. The width of an edge corresponds to the strength of the correlation.}\label{fig:graphs}
\end{figure}

It is important to realize that covariance structures at the treatment and patient levels could differ and should be estimated separately. Existing biological studies based on the aggregated measurements ignore such a difference in estimation and thus may lead to erroneous conclusions. In particular, our estimated correlation graph at the treatment level (within-subject) reveals much insight for hemodialysis treatment that cannot be recovered using the aggregate data. Specifically, we discuss several important recovered correlations in $\widehat{\Sigma}_\veps^+$ that have been missed in either $\widehat{\Sigma}_b^+$ or $\overline{\Sigma}^+$. Specifically, salt and fluid intake between two hemodialysis  sessions leads to interdialytic weight gain. A dialyzer, an artificial kidney, should filter the cumulation of waste and fluid. Ultrafiltration volume measures the waste and fluid removed from patients' blood. Consequently, higher \texttt{idwg} leads to larger \texttt{ufv}, confirmed by the positive correlation between \texttt{idwg} and \texttt{ufv} at the treatment level in Fig.~\ref{fig:graphs}. A rapid removal of fluid from a patient's blood results in the depletion of blood volume and subsequently leads to a decrease in systolic blood pressure, confirmed by the negative correlation between \texttt{ufv} and \texttt{min\_sbp} at the treatment level in Fig.~\ref{fig:graphs}. The lowered blood pressure will be compensated by heart functionality, which elevates the heart rate, again confirmed by the negative correlation between \texttt{min\_sbp} and \texttt{max\_pulse} at the treatment level in Fig.~\ref{fig:graphs}. However, no relationships among \texttt{idwg}, \texttt{max\_pulse} and \texttt{min\_sbp} have been observed at patient level in the middle panel of Fig.~\ref{fig:graphs}.
This implies that we should focus on correlations between clinical measurements at the treatment level rather than the patient level when evaluating the effectiveness of hemodialysis.

\section*{Acknowledgement}
This research was partially supported by NIH grant R01DK130067.
We thank Fresenius Medical Care North America for providing
de-identified data and Dr. Hanjie Zhang for discussing real data analysis.

\begin{appendices}
The appendices include further optimization implementation details, tuning parameter selection, comparison between the constrained and unconstrained estimators, and proofs of our main theorems.

\section{Further Details on Optimization Algorithm Implementation}
\label{sec:algo.details}

The complete algorithm solving the convex optimization problem \eqref{obj:admm0} is summarized in Algorithm \ref{alg:admm}.

\begin{algorithm}
\caption{Alternating direction method of multipliers for solving \eqref{obj:admm0}}\label{alg:admm}
\begin{algorithmic}[1]
\Require{$\delta$, $\lambda$, $\rho^{(0)}$, $B$, $\Sigma^{(0)}$,  $\Theta^{(0)}$, $\Lambda^{(0)}$, and $l = 0$.}
\State Repeat
\State $\qquad\Sigma^{(l+1)}\gets\frac{1}{1+\rho^{(l)}}\left(B+\rho \Theta^{(l)}-\Lambda^{(l)},\delta\right)_+$
\State $\qquad\Theta^{(l+1)}\gets\mathcal{S}_{\lambda/\rho^{(l)}}\left(\Sigma^{(l+1)}+\frac{1}{\rho^{(l)}}\Lambda^{(l)}\right)$
\State $\qquad\Lambda^{(l+1)}\gets\Lambda^{(l)}+\rho^{(l)}\left(\Sigma^{(l+1)}-\Theta^{(l+1)}\right)$
\State $\qquad$Update $\rho^{(l+1)}$ based on equation (3.13) in \citet{Boyd.etal:2010}
\State Until convergence
\end{algorithmic}
\end{algorithm}

A reasonable stopping criteria suggested by \citet{Boyd.etal:2010} is
\beqr
\|\Sigma^{(l+1)}-\Theta^{(l+1)}\|_F\le\eps^{\text{pri}}\qquad\text{and}\qquad\|\rho(\Theta^{(l+1)}-\Theta^{(l)})\|_F\le\eps^{\text{dual}}. \label{eq:stop1}
\eeqr
where $\eps^{\text{pri}}$ and $\eps^{\text{dual}}$ are positive feasibility tolerances for the primal and dual feasibility conditions, which are controlled by an absolute criterion $\eps^{\text{abs}}$ and a relative criterion $\eps^{\text{rel}}$: 
\beqr
\nonumber
\eps^{\text{pri}}&=&p\eps^{\text{abs}}+\eps^{\text{rel}}\max\{\|\Sigma^{(l+1)}\|_F,\|\Theta^{(l+1)}\|_F\},\\
\eps^{\text{dual}}&=&p\eps^{\text{abs}}+\eps^{\text{rel}}\|\Lambda^{(l+1)}\|_F, \label{eq:stop2}
\eeqr
where $\eps^{\text{abs}}>0$ and $\eps^{\text{rel}}>0$. In the numerical studies, we choose $\eps^{\text{abs}}=\eps^{\text{rel}}=10^{-8}$. Choice of $\rho$ can greatly impact the practical convergence of the alternating direction method procedure. And to improve the convergence, we adopt an adaptive strategy described in \citet{Boyd.etal:2010} for varying penalty parameter $\rho$. In practice, we use the soft-thresholding estimators based on the sample estimates as the initial $(\Sigma^{(0)}$,  $\Theta^{(0)})$. And the initial input for $\Lambda^{(0)}$ is a zero matrix. The initial penalty parameter $\rho$ is 0.1. Without the positive semi-definite constraints of $\Sigma_\veps$ and $\Sigma_b$ in \eqref{obj:admm0}, the unconstrained solutions will be $\mathcal{S}_{\lambda}(\widehat{\Sigma}_\veps)$ and $\mathcal{S}_{\lambda}(\widehat{\Sigma}_b)$ with $B=\widehat{\Sigma}_\veps$ and $B=\widehat{\Sigma}_b$, respectively. For efficient computation, we always first check the positive semi-definiteness of $\mathcal{S}_{\lambda}(\widehat{\Sigma}_\veps)$ and $\mathcal{S}_{\lambda}(\widehat{\Sigma}_b)$. If $\mathcal{S}_{\lambda}(\widehat{\Sigma}_\veps)$ and $\mathcal{S}_{\lambda}(\widehat{\Sigma}_b)$ are positive semi-definite, they are the final solutions to \eqref{obj:admm0}, respectively. Otherwise, we will use Algorithm \ref{alg:admm} to solve \eqref{obj:admm0}.

\section{Tuning parameters selection using cross-validation}
\label{sec:tunning}
The main optimization problem \eqref{obj:admm0} defines various estimators that we study in this paper, where $\lambda$ is the tuning parameter that controls the level of regularization of the sample estimates. We present in this section a cross-validation procedure for selecting the tuning parameter \citep{Bickel:Levina:2008b,Rothman.etal:2009,Cai:Liu:2011} specifically in the presence of repeated measurements. 

For each (of the $K$) split in a K-fold cross-validation procedure, we randomly partition the $m$ groups into a set of $m_1$ groups of training set, i.e., $\mathcal{T}_r=\{\bY_{ij}:i\in\mathcal{A}\}$ with $|\mathcal{A}|=m_1$ and a set of $m-m_1$ groups of validation set, i.e.,  $\mathcal{T}_e=\{\bY_{ij}:i\in\mathcal{A}^c\}$ with $|\mathcal{A}^c|=m-m_1$.

Let $\widehat{S}^+\{\lambda,\widehat{S}(\mathcal{T})\}$ denote a generic estimator, which is defined as a solution to the optimization problem (5) in the main text with the tuning parameter value $\lambda$ and input sample matrix $\widehat{S}(\mathcal{T})$ evaluated using a dataset $\mathcal{T}$. Specifically, the estimator $\widehat{S}^+\{\lambda, \widehat{S}(\mathcal{T})\}$ could refer to $\widehat{\Sigma}_b^+$, $\widehat{\Sigma}_\varepsilon^+$, $\widetilde{\Sigma}_b^+$, and $\overline{\Sigma}^+$. And $\widehat{S} (\mathcal{T})$ refers to the unbiased estimator $\widehat{\Sigma}_b$, $\widehat{\Sigma}_\varepsilon$, $\widetilde{\Sigma}_b$, and the biased estimator $\overline{\Sigma}$.
The cross-validation procedure is presented in the following Algorithm \ref{alg:cv} to choose the tuning parameter from a path of candidate tuning parameter values $\{ \lambda_1 > \lambda_2  > \ldots > \lambda_L \}$.

\begin{algorithm}
\caption{a K-fold Cross-Validation Procedure}\label{alg:cv}
\begin{algorithmic}[1]
\Require{$\{\bY_{ij}: 1\le i\le m, 1\le j\le n_i\}$ and $\{ \lambda_1 > \lambda_2  > \ldots > \lambda_L \}$.}
\For{$\ell = 1, \ldots, L$}
\For{$\nu=1,\ldots, K$}
\State Divide $\{\bY_{ij}: 1\le i\le m, 1\le j\le n_i\}$ into training set $\mathcal{T}^{(\nu)}_r$ and validation set  $\mathcal{T}^{(\nu)}_e$;
\State Compute the sample covariance matrix $\widehat{S}(\mathcal{T}_e^{(\nu)})$ on the validation set $\mathcal{T}^{(\nu)}_e$;
\State Compute the estimator $\widehat{S}^+\{\lambda_\ell,\widehat{S}(\mathcal{T}_r^{(\nu)})\}$ on the training set $\mathcal{T}^{(\nu)}_r$.
\EndFor
\State Compute CV estimate of error $E_\ell = \sum_{\nu=1}^K \|\widehat{S}^+\{\lambda_\ell,\widehat{S}(\mathcal{T}_r^{(\nu)})\}-\widehat{S}(\mathcal{T}_e^{(\nu)})\|_F^2 / K$.
\EndFor
\State Let $\hat{\ell} = \argmin_{\ell = 1, \ldots, L} E_\ell$, and return the selected tuning parameter $\lambda_{\hat{\ell}}$.
\end{algorithmic}
\end{algorithm}

\section{Lemmas and Proofs of the Main Theorems}
\label{sec:proofs}

\subsection{Lemmas}
\label{sec:lem}

We observe $\bY_{ij} \in \mathbb{R}^p$, which is the $j$-th repeated measurement of the $i$-th subject for $j = 1,\ldots, n_i$ and $i = 1,\ldots,m$, following the model \eqref{eq:simpleY}, where $\bm{\veps}_{ij}$ and $\bm{b}_i$ are $p$-dimensional sub-Gaussian random vectors with the true within-subject and between-subject covariance, i.e., $\mathrm{cov}(\bm{\veps}_{ij}) = \Sigma_\veps^0$ and $\mathrm{cov}(\bm{b}_i) = \Sigma_b^0$, respectively, and $\bm{b}_i$ and $\bm{\veps}_{ij}$ are mutually independent. Let $N = \sum_{i = 1}^m n_i$ be the total number of observations. We begin with several lemmas which are essential for the proofs of the main results.

\begin{lemma}\label{lem:max.r}
Consider the true within-subject covariance $\Sigma^0_{\veps}$ with  $\max_k (\Sigma^0_\veps)_{k,k}\le M_{\veps}$. Let $\lambda_\veps =C_1\{N\log p\}^{1/2}/(N-m)$ for a sufficiently large constant $C_1$. If $\log p\le N$, then the unbiased within-subject sample estimate $\widehat{\Sigma}_\veps$ satisfies
\beqrs
\text{pr}\left\{\max_{k,l}\left|(\widehat{\Sigma}_{\veps}-\Sigma^0_{\veps})_{k,l}\right|>\lambda_\veps\right\}\le4p^{-C_2}, \label{eq:lem.max.r}
\eeqrs
where $C_2>0$ only depends on $C_1$ and $M_\veps$.
\end{lemma}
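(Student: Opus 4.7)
The plan is to reduce the statement to a concentration bound for a single entry, and then take a union bound. The first step is to observe that subtracting the group mean cancels the random effects: $\bY_{ij}-\bar{\bY}_{i\cdot} = \bm{\veps}_{ij}-\bar{\bm{\veps}}_{i\cdot}$, so $\widehat{\Sigma}_\veps$ depends only on the independent errors $\{\bm{\veps}_{ij}\}$, and the between-subject component can be ignored throughout. For a fixed entry $(k,\ell)$, I would write
$$
(N-m)\bigl\{(\widehat{\Sigma}_{\veps})_{k,\ell}-(\Sigma_\veps^0)_{k,\ell}\bigr\}
= T_1 - T_2,
$$
where $T_1 = \sum_{i=1}^m\sum_{j=1}^{n_i}\{\veps_{ij,k}\veps_{ij,\ell}-(\Sigma_\veps^0)_{k,\ell}\}$ (a sum of $N$ i.i.d.\ mean-zero terms across $(i,j)$), and $T_2 = \sum_{i=1}^m\{n_i\bar{\veps}_{i\cdot,k}\bar{\veps}_{i\cdot,\ell}-(\Sigma_\veps^0)_{k,\ell}\}$ (a sum of $m$ independent mean-zero terms across $i$). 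The mean-matching follows from $E(\bar{\veps}_{i\cdot,k}\bar{\veps}_{i\cdot,\ell})=(\Sigma_\veps^0)_{k,\ell}/n_i$.

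The next step is to verify that both $T_1$ and $T_2$ are sums of mean-zero sub-exponential random variables with sub-exponential norm bounded by an absolute multiple of $M_\veps$. Since each $\veps_{ij,k}$ is sub-Gaussian with norm of order $\sqrt{M_\veps}$, the product $\veps_{ij,k}\veps_{ij,\ell}$ is sub-exponential with norm of order $M_\veps$, handling $T_1$. For $T_2$, the group mean $\bar{\veps}_{i\cdot,k}$ is sub-Gaussian with norm of order $\sqrt{M_\veps/n_i}$, so $n_i\bar{\veps}_{i\cdot,k}\bar{\veps}_{i\cdot,\ell}$ is sub-exponential with norm of order $M_\veps$ uniformly in $i$. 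I would then apply Bernstein's inequality to each sum separately: with $t=\tfrac{1}{2}C_1(N\log p)^{1/2}$, the bound for $T_1$ scales as $\exp\bigl(-c\min\{t^2/(NM_\veps^2),\,t/M_\veps\}\bigr)$, and under the assumption $\log p\le N$ the quadratic branch dominates, giving $\mathrm{pr}(|T_1|>t)\le 2p^{-c'}$ for some $c'=c'(C_1,M_\veps)$ that grows with $C_1$. For $T_2$, the same choice of $t$ combined with $m\le N$ guarantees $t^2/(mM_\veps^2)\ge t^2/(NM_\veps^2)$, so an identical Bernstein bound $\mathrm{pr}(|T_2|>t)\le 2p^{-c'}$ applies.

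Combining the two, $\mathrm{pr}\bigl(|T_1-T_2|>C_1(N\log p)^{1/2}\bigr)\le 4p^{-c'}$, which after dividing by $N-m$ yields the per-entry bound $\mathrm{pr}\bigl(|(\widehat{\Sigma}_\veps-\Sigma_\veps^0)_{k,\ell}|>\lambda_\veps\bigr)\le 4p^{-c'}$. A union bound over the at most $p^2$ distinct index pairs $(k,\ell)$ absorbs two factors of $p$ into the exponent, so choosing $C_1$ large enough that $c'-2\ge C_2$ delivers the claimed probability $1-4p^{-C_2}$ with $C_2$ depending only on $C_1$ and $M_\veps$.

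The main obstacle I anticipate is bookkeeping rather than conceptual: carefully controlling the sub-exponential norm of the averaged term $n_i\bar{\veps}_{i\cdot,k}\bar{\veps}_{i\cdot,\ell}$ uniformly in the possibly unbalanced group sizes $n_i$, and verifying that the Bernstein quadratic branch dominates simultaneously for $T_1$ (with $N$ summands) and $T_2$ (with $m$ summands) under the single condition $\log p\le N$. Once the uniform sub-exponential scale $O(M_\veps)$ is in hand, the rest is a standard Bernstein-plus-union-bound argument tailored to the effective sample size $(N-m)^2/N$ implicit in $\lambda_\veps$.
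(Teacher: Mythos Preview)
Your proposal is correct and follows essentially the same approach as the paper: the identical decomposition $(N-m)\{(\widehat\Sigma_\veps)_{k,\ell}-(\Sigma^0_\veps)_{k,\ell}\}=T_1-T_2$ (the paper writes $T_2$ via $n_i^{-1}S_{i\cdot k}S_{i\cdot \ell}$, which equals your $n_i\bar\veps_{i\cdot,k}\bar\veps_{i\cdot,\ell}$), the same uniform-in-$n_i$ sub-exponential bound of order $M_\veps$ for each summand, and the same Bernstein-plus-union-bound conclusion. The only cosmetic difference is that the paper folds the $p^2$ union bound into the Bernstein step rather than applying it afterward.
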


\begin{proof}
We first rewrite $\widehat{\Sigma}_\veps$ as follows, 
\beqrs
\nonumber
\widehat{\Sigma}_\veps&=&\frac{1}{N-m}\sum_{i=1}^m\sum_{j=1}^{n_i}(\bY_{ij}-\bar{\bY}_{i\cdot})(\bY_{ij}-\bar{\bY}_{i\cdot})^\T\\
&=&\frac{1}{N-m}\left(\sum_{i=1}^m\sum_{j=1}^{n_i}\bm{\veps}_{ij}\bm{\veps}_{ij}^\T-\sum_{i=1}^mn_i\bar{\bm{\veps}}_{i\cdot}\bar{\bm{\veps}}_{i\cdot}^\T\right). \label{eq:s.m.mdR2.0}
\eeqrs
Then,
\beqr
\nonumber
(\widehat{\Sigma}_{\veps})_{k,l}&=&\frac{1}{N-m}\left(\sum_{i=1}^m\sum_{j=1}^{n_i}\veps_{ijk}\veps_{ijl}-\sum_{i=1}^mn_i\bar{\veps}_{i\cdot k}\bar{\veps}_{i\cdot l}\right)\\ \nonumber
&=&\frac{1}{N-m}\sum_{i=1}^m\sum_{j=1}^{n_i}\veps_{ijk}\veps_{ijl}-\frac{1}{N-m}\sum_{i=1}^mn_i\bar{\veps}_{i\cdot k}\bar{\veps}_{i\cdot l}\\ \nonumber
&=&\frac{1}{N-m}\sum_{i=1}^m\sum_{j=1}^{n_i}\left\{\veps_{ijk}\veps_{ijl}-(\Sigma^0_{\veps})_{k,l}\right\}\\
&&-\frac{1}{N-m}\sum_{i=1}^m\left\{\frac{1}{n_i}S_{i\cdot k}S_{i\cdot l}-(\Sigma^0_{\veps})_{k,l}\right\}+(\Sigma^0_{\veps})_{k,l}, \label{eq:s.m.mdR2.1}
\eeqr
where $S_{i\cdot k}=\sum_{j=1}^{n_i}\veps_{ijk}$.

By (\ref{eq:s.m.mdR2.1}),
\beqr
\nonumber
&&\max_{k,l}\left|(\widehat{\Sigma}_{\veps}-\Sigma^0_{\veps})_{k,l}\right|\\ \nonumber
&\le&\max_{k,l}\frac{1}{N-m}\left|\sum_{i=1}^m\sum_{j=1}^{n_i}\left\{\veps_{ijk}\veps_{ijl}-(\Sigma^0_{\veps})_{k,l}\right\}\right|\\ 
&&+\frac{1}{N-m}\max_{k,l}\left|\sum_{i=1}^m\left\{\frac{1}{n_i}S_{i\cdot k}S_{i\cdot l}-(\Sigma^0_{\veps})_{k,l}\right\}\right|. \label{eq:s.m.mdR2.2}
\eeqr

Now, we assume that $\veps_{ijk}\in\mathcal{SG}(\sigma^2_{\veps,k})$, i.e., $\veps_{ijk}$ is sub-Gaussian with a variance factor $\sigma_{\veps,k}^2$ for $1\le i\le m, 1\le j\le n_i, 1\le k\le p$. 
It is easy to check that $n_i^{-1/2}S_{i\cdot k}\in\mathcal{SG}(\sigma_{\veps,k}^2)$.

Let $\psi:\mathbb{R}_+\to\mathbb{R}_+$ be a convex function with $\psi(0)=0$, especially, $\psi_q(v)= \text{exp}(|v|^q)-1$, for $q\in[1,2]$. Then for an $\mathbb{R}$-valued random variable $X$, the Orlicz norm of $X$ is $\|X\|_{\psi}= \inf\{t\in\mathbb{R}_+: E\{\psi(|X|/t)\}\le1\}.$ And by the properties of Orlicz norms, for any random variable $X$ and any increasing convex $\psi:\mathbb{R}_+\to\mathbb{R}_+$ with $\psi(0)=0$, we have
\beqr
\|X-E(X)\|_{\psi}\le2\|X\|_{\psi}. \label{eq:subGp1}
\eeqr
Moreover, if $X\in\mathcal{SG}(\sigma^2)$, then 
\beqr
\|X\|_{\psi_2}\le c_0\sigma,\label{eq:subGp2}
\eeqr
for some $c_0\le(8/3)^{1/2}$.

Since $\veps_{ijk}\in\mathcal{SG}(\sigma^2_{\veps,k})$ and $n_i^{-1/2}S_{i \cdot k}\in\mathcal{SG}(\sigma_{\veps,k}^2)$, by Lemma 2.7.7 in \citet{Vershyin:2018}, $\veps_{ijk}\veps_{ijl}$ and $n_i^{-1}S_{i \cdot k} S_{i \cdot l}$  are sub-Exponential random variables. 
Let $\max_k \sigma_{\veps,k}^2=M_\veps$. Combining (\ref{eq:subGp1}) and (\ref{eq:subGp2}), Lemma 2.7.7 in \citet{Vershyin:2018} implies that
\beqrs
\left\|\veps_{ijk}\veps_{ijl}-(\Sigma^0_{\veps})_{k,l}\right\|_{\psi_1}\le2\left\|\veps_{ijk}\veps_{ijl}\right\|_{\psi_1} \le 2\left\|\veps_{ijk}\right\|_{\psi_2}\left\|\veps_{ijl}\right\|_{\psi_2}\le c_1M_{\veps},\label{eq:subGpR1.01}
\eeqrs
and
\beqrs
\left\|n_i^{-1}S_{i\cdot k}S_{i\cdot l} -(\Sigma^0_{\veps})_{k,l}\right\|_{\psi_1}\le2\left\|n_i^{-1}S_{i\cdot k}S_{i\cdot l}\right\|_{\psi_1} 
\le 2\left\|n_i^{-1/2}S_{i\cdot k} \right\|_{\psi_2}\left\|n_i^{-1/2}S_{i\cdot l}\right\|_{\psi_2}\le c_1M_{\veps},\label{eq:subGpR1.02}
\eeqrs
where $c_1=2c_0^2$. 

Hence, for the first term in (\ref{eq:s.m.mdR2.2}), by the union sum inequality and Bernstein's inequality (Theorem 2.8.2 in \citet{Vershyin:2018}), we can get
\beqr
\nonumber
&&\text{pr}\left[\max_{k,l}\frac{1}{N-m}\left|\sum_{i=1}^m\sum_{j=1}^{n_i}\{\veps_{ijk}\veps_{ijl}-(\Sigma^0_{\veps})_{k,l}\}\right|\ge t\right]\\
&\le& 2p^2\text{exp}\left[-c_2\min\left\{\frac{t^2(N-m)^2}{NK_1^2},\frac{t(N-m)}{K_1}\right\}\right], \label{eq:s.m.pbR1.3}
\eeqr
where $c_2>0$, $K_1=\max_{i,k,l}\|\veps_{ijk}\veps_{ijl}-(\Sigma^0_{\veps})_{k,l}\|_{\psi_1}\le c_1M_\veps$.

Similarly, 
\beqr
\nonumber
&&\text{pr}\left[\frac{1}{N-m}\max_{k_1}\left|\sum_{i=1}^m\left\{\frac{1}{n_i}S_{i\cdot k_1}S_{i\cdot k_2}-(\Sigma^0_{\veps})_{k,l}\right\}\right|\ge t\right]\\
&\le& 2p^2\text{exp}\left[-c_3\min\left\{\frac{t^2(N-m)^2}{mK_2^2},\frac{t(N-m)}{K_2}\right\}\right], \label{eq:s.m.pbR1.2}
\eeqr
where $c_3>0$, $K_2=\max_{i,k,l}\|n_i^{-1}S_{i\cdot k}S_{i\cdot l}-(\Sigma^0_{\veps})_{k,l}\|_{\psi_1}\le c_1M_\veps$.

By \eqref{eq:s.m.pbR1.3} and \eqref{eq:s.m.pbR1.2}, take $t = C_1(N\log p)^{1/2}/\{2(N-m)\}$ for a sufficiently large constant $C_1>0$, with $N>\log p$, we will have
\beqr
\nonumber
&&\text{pr}\left[\max_{k,l}\frac{1}{N-m}\left|\sum_{i=1}^m\sum_{j=1}^{n_i}\{\veps_{ijk}\veps_{ijl}-(\Sigma^0_{\veps})_{k,l}\}\right|\ge t\right]\\ \nonumber
&\le&2\text{exp}\left[\max\left\{\left(2-\frac{c_2NC_1^2}{4mK_1^2}\right)\log p,2\log p-\frac{c_2C_1}{2K_1}(N\log p)^{1/2}\right\}\right]\\
&\le&2\text{exp}\left\{\max\left(2-\frac{c_2C_1^2}{4c_1^2M_{\veps}^2},2-\frac{c_2C_1}{2c_1M_{\veps}}\right)\log p\right\},\label{eq:s.m.pbR1.5}
\eeqr
and
\beqr
\nonumber
&&\text{pr}\left[\frac{1}{N-m}\max_{k,l}\left|\sum_{i=1}^m\left\{\frac{1}{n_i}S_{i\cdot k}S_{i\cdot l}-(\Sigma^0_{\veps})_{k,l}\right\}\right|\ge t\right]\\ \nonumber
&\le&2\text{exp}\left[\max\left\{\left(2-\frac{c_3NC_1^2}{4mK_2^2}\right)\log p,2\log p-\frac{c_3C_1}{2K_2}(N\log p)^{1/2}\right\}\right]\\ 
&\le&2\text{exp}\left\{\max\left(2-\frac{c_3C_1^2}{4c_1^2M_{\veps}^2},2-\frac{c_3C_1}{2c_1M_{\veps}}\right)\log p\right\}. \label{eq:s.m.pbR1.4}
\eeqr

Combining (\ref{eq:s.m.pbR1.5}) and (\ref{eq:s.m.pbR1.4}), with $\lambda_\veps=C_1(N\log p)^{1/2}/(N-m)$, we have
\beqrs
\nonumber
&&\text{pr}\left\{\max_{k,l}\left|(\widehat{\Sigma}_\veps-\Sigma_\veps^0)_{k,l}\right|>\lambda_\veps\right\}\\ \nonumber
&\le&\text{pr}\left[\frac{1}{N-m}\max_{k_1}\left|\sum_{i=1}^m\left\{\frac{1}{n_i}S_{i\cdot k}S_{i\cdot l}-(\Sigma^0_{\veps})_{k,l}\right\}\right|\ge \frac{C_1(N\log p)^{1/2}}{2(N-m)}\right]\\ \nonumber
&&\qquad+\text{pr}\left[\max_{k,l}\frac{1}{N-m}\left|\sum_{i=1}^m\sum_{j=1}^{n_i}\{\veps_{ijk}\veps_{ijl}-(\Sigma^0_{\veps})_{k,l}\}\right|\ge \frac{C_1(N\log p)^{1/2}}{2(N-m)}\right]\\ \nonumber
&\le&2\text{exp}\left\{\max\left(2-\frac{c_3C_1^2}{4c_1^2M_{\veps}^2},2-\frac{c_3C_1}{2c_1M_{\veps}}\right)\log p\right\}\\ \nonumber
&&\qquad+2\text{exp}\left\{\max\left(2-\frac{c_2C_1^2}{4c_1^2M_{\veps}^2},2-\frac{c_2C_1}{2c_1M_{\veps}}\right)\log p\right\}\\
&\le&4p^{-C_2}, \label{eq:s.m.pbR1.6}
\eeqrs
where $C_2=\min\{c_3C_1(2c_1M_{\veps})^{-1}, c_3,c_2C_1(2c_1M_{\veps})^{-1},c_2\}(2c_1M_{\veps})^{-1}C_1-2$.
\end{proof}

\begin{lemma}\label{lem:max.b2}
Consider the true within-subject covariance $\Sigma^0_{\veps}$ with  $\max_k (\Sigma^0_\veps)_{k,k}\le M_{\veps}$ and the true between-subject covariance $\Sigma^0_b$ with  $\max_k (\Sigma^0_b)_{k,k}\le M_b$. Let 
\beqrs
\lambda_b = C_1 \left(\frac{\log p}{m}\right)^{1/2} + C_2 \frac{\left( N\log p\right)^{1/2}}{(N - m) n^\ast} + \frac{M_b}{m} + \frac{M_\veps}{m n^\ast } 
\eeqrs
for sufficiently large $C_1, C_2>0$, where $n^\ast=m/\sum_{i=1}^mn_i^{-1}$. If $\log p\le m$, then the unbiased between-subject sample estimate $\widehat{\Sigma}_b$ satisfies 
\beqrs
\text{pr}\left\{\max_{k,l}\left|(\widehat{\Sigma}_b-\Sigma_b^0)_{k,l}\right|>2\lambda_b\right\}\le8p^{-C_3}, \label{eq:lem.max.b2}
\eeqrs
where $C_3> 0$ only depends on $C_1$, $C_2$ and $\max(M_\veps,M_b)$.
\end{lemma}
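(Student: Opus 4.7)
The plan is to mimic the decomposition structure of Lemma \ref{lem:max.r}, splitting $\widehat\Sigma_b - \Sigma_b^0$ into a piece directly controlled by Lemma \ref{lem:max.r} and a piece concentrating around its mean through standard sub-exponential arguments. Using the identity $\sum_{i=1}^m (mn_i)^{-1} = m^{-1}\sum_{i=1}^m n_i^{-1} = 1/n^\ast$, the definition of $\widehat\Sigma_b$ in \eqref{eq:sample} gives
\[
\widehat\Sigma_b - \Sigma_b^0 = \bigl\{\overline\Sigma - E(\overline\Sigma)\bigr\} - (n^\ast)^{-1}\bigl\{\widehat\Sigma_\veps - \Sigma_\veps^0\bigr\}.
\]
The second bracket is directly handled by Lemma \ref{lem:max.r}: with probability at least $1 - 4p^{-C_2}$, $(n^\ast)^{-1}\max_{k,l}|(\widehat\Sigma_\veps - \Sigma_\veps^0)_{k,l}| \le \lambda_\veps/n^\ast$, which produces the $C_2(N\log p)^{1/2}/\{(N-m)n^\ast\}$ term in $\lambda_b$.

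For the first bracket, set $Z_i = \bar\bY_{i\cdot} = \bm{b}_i + \bar{\bm\veps}_{i\cdot}$, so that $Z_1,\ldots,Z_m$ are independent sub-Gaussian vectors with coordinate variance $\mathrm{Var}(Z_{ik}) = (\Sigma_b^0)_{kk} + n_i^{-1}(\Sigma_\veps^0)_{kk} \le M_b + M_\veps$. A direct expansion of the sample covariance yields
\[
\overline\Sigma - E\overline\Sigma = \frac{1}{m-1}\sum_{i=1}^m\bigl\{Z_iZ_i^\T - E(Z_iZ_i^\T)\bigr\} - \frac{m}{m-1}\bigl\{\bar Z\bar Z^\T - E(\bar Z\bar Z^\T)\bigr\},
\]
where $\bar Z = m^{-1}\sum_i Z_i$. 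I would treat the two summands separately. For the first, each entry $Z_{ik}Z_{il} - E(Z_{ik}Z_{il})$ is a centered sub-exponential with $\psi_1$-norm at most $c(M_b + M_\veps)$, via the product bound from Lemma 2.7.7 of \citet{Vershyin:2018} that was used in the proof of Lemma \ref{lem:max.r}. Bernstein's inequality together with a union bound over the $p^2$ entry pairs then delivers an entry-wise bound of order $C_1(\log p/m)^{1/2}$ with probability at least $1 - 2p^{-C_3}$ under $\log p \le m$, contributing the first term of $\lambda_b$.

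For the rank-one correction $\bar Z\bar Z^\T$, each $\bar Z_k$ is sub-Gaussian with variance factor $v_k = (\Sigma_b^0)_{kk}/m + (\Sigma_\veps^0)_{kk}/(mn^\ast) \le M_b/m + M_\veps/(mn^\ast)$. Its deterministic mean satisfies $|E(\bar Z_k\bar Z_l)| = |(\Sigma_b^0)_{k,l}/m + (\Sigma_\veps^0)_{k,l}/(mn^\ast)| \le M_b/m + M_\veps/(mn^\ast)$, since Cauchy--Schwarz gives $|\Sigma_{k,l}| \le \max_j \Sigma_{j,j}$ for any positive semi-definite $\Sigma$; this is what produces the two $M_b/m$ and $M_\veps/(mn^\ast)$ terms in $\lambda_b$. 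The random fluctuation $\max_{k,l}|\bar Z_k\bar Z_l|$ is controlled by $\max_k|\bar Z_k|^2 \le C v_{\max}\log p$ via a sub-Gaussian maximal inequality plus union bound over $p$ coordinates, and under $\log p\le m$ this is at most $C(M_b + M_\veps)(\log p/m)^{1/2}$, which can be folded into the leading $C_1(\log p/m)^{1/2}$ term by enlarging $C_1$ to depend on $M_b + M_\veps$. A triangle inequality and a union bound over the three high-probability events then deliver $\max_{k,l}|(\widehat\Sigma_b - \Sigma_b^0)_{k,l}| \le 2\lambda_b$ with probability at least $1 - 8p^{-C_3}$.

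The main obstacle will be the rank-one correction $\bar Z\bar Z^\T$: a direct sub-exponential tail bound on the product carries an extraneous $\log p$ factor, and it is only the hypothesis $\log p \le m$ that permits folding this factor into the leading $(\log p/m)^{1/2}$ rate. Separating $\bar Z\bar Z^\T$ into its deterministic mean, which contributes the irreducible $M_b/m + M_\veps/(mn^\ast)$ pieces, and its random fluctuation, which is absorbed into the first term, is what forces exactly the four additive terms appearing in the definition of $\lambda_b$.
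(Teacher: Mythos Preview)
Your proposal is correct and follows essentially the same route as the paper's proof: both split $\widehat\Sigma_b-\Sigma_b^0$ into $(n^\ast)^{-1}(\widehat\Sigma_\veps-\Sigma_\veps^0)$ (handled by Lemma~\ref{lem:max.r}) and $\overline\Sigma-E\overline\Sigma$, with the latter expanded via $Z_i=\bar\bY_{i\cdot}$ (the paper's $W_i$) into a centered sum controlled by Bernstein's inequality and a rank-one term $\bar Z\bar Z^\T$ controlled by a sub-Gaussian maximal/Hoeffding bound under $\log p\le m$. The only cosmetic difference is that you subtract and re-add $E(\bar Z\bar Z^\T)$ to isolate the deterministic $M_b/m+M_\veps/(mn^\ast)$ contribution, whereas the paper carries those terms separately from the outset; the two bookkeepings are algebraically identical.
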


\begin{proof}
Let $\bar{Y}_{i\cdot k}=b_{ik}+n_i^{-1}\sum_{j=1}^{n_i}\veps_{ijk}=b_{ik}+n_i^{-1}S_{i\cdot k}=W_{ik}$, then by decomposition,  
\beqr
\nonumber
(\widehat{\Sigma}_b-\Sigma^0_b)_{k,l}&=&\{\overline{\Sigma}-(n^\ast)^{-1}\widehat{\Sigma}_\veps-\Sigma^0_b\}_{k,l}\\ \nonumber
&=&[\overline{\Sigma}-\{\Sigma^0_b+(n^\ast)^{-1}\Sigma^0_\veps\}]_{k,l}-(n^\ast)^{-1}(\widehat{\Sigma}_\veps-\Sigma^0_\veps)_{k,l}\\ \nonumber
&=&\frac{1}{m-1}\sum_{i=1}^m\left\{W_{ik}W_{il}-\left(\Sigma_b^0+n_i^{-1}\Sigma^0_{\veps}\right)_{k,l}\right\}\\ \nonumber
&&-\frac{m}{m-1}\left(\frac{1}{m}\sum_{i=1}^mW_{ik}\right)\left(\frac{1}{m}\sum_{i=1}^mW_{il}\right)\\
&&+\frac{(\Sigma_b^0)_{k,l}}{m-1}+\frac{(\Sigma^0_{\veps})_{k,l}}{(m-1)n^\ast}-\frac{(\widehat{\Sigma}_\veps-\Sigma^0_\veps)_{k,l}}{n^\ast}. \label{eq:b2.dc.1}
\eeqr

Then, with $|(\Sigma_b^0)_{k,l}|\le M_b$ and $|(\Sigma^0_{\veps})_{k,l}|\le M_{\veps}$, we have
\beqr
\nonumber
\max_{k,l}\left|(\widehat{\Sigma}_b-\Sigma^0_b)_{k,l}\right|&\le&2\max_{k,l}\left|\frac{1}{m}\sum_{i=1}^m\left\{W_{ik}W_{il}-\left(\Sigma_b^0+n_i^{-1}\Sigma^0_{\veps}\right)_{k,l}\right\}\right|\\ \nonumber
&&+2\max_{k,l}\left|\left(\frac{1}{m}\sum_{i=1}^mW_{ik}\right)\left(\frac{1}{m}\sum_{i=1}^mW_{il}\right)\right|\\ \nonumber
&&+\max_{k,l}(n^\ast)^{-1}\left|(\widehat{\Sigma}_\veps-\Sigma^0_\veps)_{k,l}\right|\\
&&+\frac{2M_b}{m}+\frac{2M_\veps}{mn^\ast}. \label{eq:b2.dc.2}
\eeqr

Assume that $b_{ik}\in\mathcal{SG}(\sigma_{b,k}^2)$, i.e., $b_{ik}$ is sub-Gaussian with a variance factor $\sigma_{b,k}^2$ for $1\le i\le m, 1\le k\le p$. Then $W_{ik}\in\mathcal{SG}(\sigma_{b,k}^2+n_i^{-1}\sigma_{\veps,k}^2)$. Let $\max_k \sigma_{b,k}^2=M_b$. Then, by Lemma 2.7.7 in \citet{Vershyin:2018}, we obtain
\beqrs
\left\|W_{ik}W_{il}-\left(\Sigma_b^0+n_i^{-1}\Sigma^0_\veps\right)_{k,l}\right\|_{\psi_1}&\le&2\left\|W_{ik}\right\|_{\psi_2}\left\|W_{il}\right\|_{\psi_2}\\
&\le& c_1\left(\Sigma_b^0+n_i^{-1}\Sigma^0_\veps\right)_{k,l}\\
&\le& c_1\left(1+n_l^{-1}\right)M_\ast\\
&\le& 2c_1M_\ast,
\eeqrs
where $n_l=\min n_i$ and $M_\ast=\max (M_\veps,M_b)$. And with the Bernstein's inequality, we have
\beqrs
&&\text{pr}\left[\frac{1}{m}\left|\sum_{i=1}^m\left\{W_{ik}W_{il}-(\Sigma_b^0+n_i^{-1}\Sigma_\veps^0)_{k,l}\right\}\right|\ge t\right]\le 2\text{exp}\left\{-c_4\min\left(\frac{mt^2}{K_3^2},\frac{mt}{K_3}\right)\right\}, \label{eq:b2.prob1}
\eeqrs
where $c_4>0$, $K_3=\max_{i,k,l}\|W_{ik}W_{il}-(\Sigma_b^0+n_i^{-1}\Sigma_\veps^0)_{k,l}\|_{\psi_1}\le 2c_1M_\ast$.

By the union sum inequality and taking $t=2^{-1}C_1(\log p/m)^{1/2}$ for a sufficiently large constant $C_1>0$, if $m\ge\log p$, we have
\beqr
\nonumber
&&\text{pr}\left[\max_{k,l}\frac{1}{m}\left|\sum_{i=1}^m\left\{W_{ik}W_{il}-(\Sigma_b^0+n_i^{-1}\Sigma^0_\veps)_{k,l}\right\}\right|\ge t\right]\\ \nonumber
&\le& 2p^2\text{exp}\left[-c_4\min\left\{\frac{C_1^2\log p}{4K_3^2},\frac{C_1(m\log p)^{1/2}}{2K_3}\right\}\right]\\
&\le&2\text{exp}\left[\left\{2-\min\left(\frac{c_4C_1^2}{16c_1^2M_\ast^2},\frac{c_4C_1}{4c_1M_\ast}\right)\right\}\log p\right]. \label{eq:b2.prob2}
\eeqr

We use a union bound with the general Hoeffding's inequality (Theorem 2.6.2 by \citet{Vershyin:2018}) to bound the second term in \eqref{eq:b2.dc.2}. Specifically, with $m\ge\log p$ and taking $t=2^{-1}C_1(\log p/m)^{1/2}$, we have
\beqr
\nonumber
\text{pr}\left(\max_{k,l}\left|\frac{1}{m}\sum_{i=1}^mW_{ik}\right|^2\ge t\right)
&=&\text{pr}\left(\max_{k,l}\left|\sum_{i=1}^mW_{ik}\right|\ge mt^{1/2}\right)\\ \nonumber
&\le&2p\text{exp}\left(-\frac{c_5m^2t}{\sum_{i=1}^m\|W_{ik}\|_{\psi_2}^2}\right)\\ \nonumber
&\le&2p\text{exp}\left(-\frac{c_5mt}{c_1M_\ast}\right)\\ \nonumber
&=&2p\text{exp}\left\{-\frac{c_5C_1}{2c_1M_\ast}(m\log p)^{1/2}\right\}\\
&\le&2\text{exp}\left\{\left(1-\frac{c_5C_1}{2c_1M_\ast}\right)\log p\right\}, \label{eq:b2.prob3}
\eeqr
where $c_5>0$.

For the third term in \eqref{eq:b2.dc.2}, by Lemma \ref{lem:max.r}, for a sufficiently large constant $C_2>0$, we have
\beqr
\text{pr}\left\{\max_{k,l}\frac{\left|(\widehat{\Sigma}_\veps-\Sigma_\veps^0)_{k,l}\right|}{n^\ast}\ge 2C_2\frac{(N\log p)^{1/2}}{(N-m)n^\ast}\right\}\le 4p^{-C_3^\prime}, \label{eq:b2.prob4}
\eeqr
where $C_3^\prime>0$ only depends on $C_2$ and $M_\veps$.

Collecting \eqref{eq:b2.prob2}-\eqref{eq:b2.prob4}, with
\beqrs
\lambda_b=C_1\left(\frac{\log p}{m}\right)^{1/2}+C_2\frac{(N\log p)^{1/2}}{(N-m)n_*}+\frac{M_b}{m}+\frac{M_\veps}{mn_*},
\eeqrs 
we have
\beqrs
\nonumber
&&\text{pr}\left\{\max_{k,l}\left|(\widehat{\Sigma}_b-\Sigma_b^0)_{k,l}\right|\ge2\lambda_b\right\}\\ \nonumber
&\le&\text{pr}\left[\max_{k,l}\frac{2}{m}\left|\sum_{i=1}^m\left\{W_{ik}W_{il}-(\Sigma_b^0+n_i^{-1}\Sigma_\veps^0)_{k,l}\right\}\right|\ge C_1\left(\frac{\log p}{m}\right)^{1/2}\right]\\ \nonumber
&&\qquad+\text{pr}\left\{2\max_{k,l}\left|\left(\frac{1}{m}\sum_{i=1}^mW_{ik}\right)\left(\frac{1}{m}\sum_{i=1}^mW_{il}\right)\right|\ge C_1\left(\frac{\log p}{m}\right)^{1/2}\right\}\\ \nonumber
&&\qquad+\text{pr}\left\{\max_{k,l}\frac{\left|(\widehat{\Sigma}_\veps-\Sigma_\veps^0)_{k,l}\right|}{n^\ast}\ge 2C_2\frac{(N\log p)^{1/2}}{(N-m)n^\ast}\right\}\\ \nonumber
&\le&4p^{-C_3^\prime}+4p^{-C_3^{\dprime}}\\
&\le&8p^{-C_3}, \label{eq:b2.prob5}
\eeqrs
where $C_3^{\dprime} = \min\{c_4C_1^2(16c_1^2M_\ast^2)^{-1},c_4C_1(4c_1M_\ast)^{-1},c_5C_1(2c_1M_\ast)^{-1}+1\}-2$ and $C_3 = \min(C_3^\prime,C_3^{\dprime})$.
\end{proof}


\begin{lemma}\label{lem:max.a}
Consider the true within-subject covariance $\Sigma^0_{\veps}$ with  $\max_k (\Sigma^0_\veps)_{k,k}\le M_{\veps}$ and the true between-subject covariance $\Sigma^0_b$ with  $\max_k (\Sigma^0_b)_{k,k}\le M_b$. Let 
\beqrs
\lambda_0 = C_1 \left(\frac{\log p}{m}\right)^{1/2} + \frac{M_b}{m} + \frac{M_\veps}{n^\ast } 
\eeqrs
for sufficiently large $C_1>0$, where $n^\ast=m/\sum_{i=1}^mn_i^{-1}$. If $\log p\le m$, then the naive between-subject sample estimate $\overline{\Sigma}$ satisfies 
\beqrs
\text{pr}\left\{\max_{k,l}\left|(\overline{\Sigma}-\Sigma_b^0)_{k,l}\right|>2\lambda_b\right\}\le8p^{-C_2}
\eeqrs
where $C_2> 0$ only depends on $C_1$ and $\max(M_\veps,M_b)$.
\end{lemma}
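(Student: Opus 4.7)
The plan is to directly decompose $\overline{\Sigma}-\Sigma_b^0$ entry-wise and re-use the Bernstein and Hoeffding arguments developed in the proof of Lemma~\ref{lem:max.b2}. The key simplification is that no debiasing correction via $\widehat{\Sigma}_\veps/n^*$ is performed here, so Lemma~\ref{lem:max.r} never enters---which is precisely why $\lambda_0$ omits the $C_2(N\log p)^{1/2}/\{(N-m)n^*\}$ term present in $\lambda_b$, at the price of retaining $M_\veps/n^*$ as a deterministic bias.

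Set $W_{ik}=\bar{Y}_{i\cdot k}=b_{ik}+n_i^{-1}S_{i\cdot k}$ and $\bar{W}_k=m^{-1}\sum_i W_{ik}$, as in the proof of Lemma~\ref{lem:max.b2}. Expanding $\overline{\Sigma}_{k,l}=(m-1)^{-1}\sum_i W_{ik}W_{il}-m(m-1)^{-1}\bar{W}_k\bar{W}_l$, and noting that $\mathrm{E}(W_{ik}W_{il})=(\Sigma_b^0+n_i^{-1}\Sigma_\veps^0)_{k,l}$ with $m^{-1}\sum_i n_i^{-1}=1/n^*$, centering term by term yields
\begin{align*}
(\overline{\Sigma}-\Sigma_b^0)_{k,l}
&=\frac{1}{m-1}\sum_{i=1}^m\left\{W_{ik}W_{il}-\bigl(\Sigma_b^0+n_i^{-1}\Sigma_\veps^0\bigr)_{k,l}\right\}\\
&\quad -\frac{m}{m-1}\bar{W}_k\bar{W}_l+\frac{(\Sigma_b^0)_{k,l}}{m-1}+\frac{m(\Sigma_\veps^0)_{k,l}}{(m-1)n^*}.
\end{align*}
By Cauchy--Schwarz, $|(\Sigma_b^0)_{k,l}|\le M_b$ and $|(\Sigma_\veps^0)_{k,l}|\le M_\veps$, so the last two deterministic pieces contribute at most $M_b/(m-1)+mM_\veps/\{(m-1)n^*\}$, giving the $M_b/m+M_\veps/n^*$ portion of $\lambda_0$ up to an absolute constant that can be absorbed into $C_1$.

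The two random pieces are then handled verbatim as in the proof of Lemma~\ref{lem:max.b2}. The summand $W_{ik}W_{il}-(\Sigma_b^0+n_i^{-1}\Sigma_\veps^0)_{k,l}$ is sub-exponential with $\psi_1$-Orlicz norm bounded by a universal constant times $\max(M_b,M_\veps)$, so Bernstein's inequality together with a union bound over $(k,l)\in\{1,\ldots,p\}^2$ gives
$\max_{k,l}(m-1)^{-1}|\sum_{i=1}^m\{W_{ik}W_{il}-(\Sigma_b^0+n_i^{-1}\Sigma_\veps^0)_{k,l}\}|\le C_1(\log p/m)^{1/2}/2$ with probability at least $1-2p^{-C_2^\prime}$ whenever $\log p\le m$ and $C_1$ is sufficiently large. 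Similarly, $W_{ik}\in\mathcal{SG}(M_b+n_i^{-1}M_\veps)\subseteq\mathcal{SG}(2\max(M_b,M_\veps))$, and the general Hoeffding inequality combined with a union bound over $k$ controls $\max_k\bar{W}_k^{2}\le C_1(\log p/m)^{1/2}/2$, and therefore $\max_{k,l}|\bar{W}_k\bar{W}_l|$ by the same quantity, with probability at least $1-2p^{-C_2^{\dprime}}$. Applying the triangle inequality on the intersection of these two events yields $\max_{k,l}|(\overline{\Sigma}-\Sigma_b^0)_{k,l}|\le 2\lambda_0$ with probability at least $1-4p^{-C_2}\le 1-8p^{-C_2}$, matching the claim.

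The argument is in fact strictly lighter than the proof of Lemma~\ref{lem:max.b2}: no new concentration inequality is required. The only real obstacle is the bookkeeping that ensures (a) the deterministic coefficients $1/(m-1)$ and $m/\{(m-1)n^*\}$ in front of the bias pieces collapse to the clean $M_b/m$ and $M_\veps/n^*$ form stated in $\lambda_0$ up to an absolute constant, and (b) the final exponent $C_2$ ends up depending only on $C_1$ and $\max(M_\veps,M_b)$, which is immediate from tracking the Bernstein/Hoeffding constants through the union-bound step.
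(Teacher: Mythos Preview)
Your proposal is correct and follows essentially the same route as the paper: the identical decomposition of $(\overline{\Sigma}-\Sigma_b^0)_{k,l}$ into a centered sub-exponential sum, a product of means, and the two deterministic bias pieces $(\Sigma_b^0)_{k,l}/(m-1)+m(\Sigma_\veps^0)_{k,l}/\{(m-1)n^\ast\}$, followed by the Bernstein and Hoeffding bounds already established in Lemma~\ref{lem:max.b2}. The only quibble is your phrase ``absorbed into $C_1$'' for the deterministic constants---those constants do not multiply the $(\log p/m)^{1/2}$ term but are simply bounded by $2M_b/m+2M_\veps/n^\ast$ via $m/(m-1)\le 2$, which is precisely the corresponding portion of $2\lambda_0$; the paper handles this in the same way.
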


\begin{proof}
Now, we will  consider the convergence rate of $\max_{k,l}|(\overline{\Sigma}-\Sigma^0_b)_{k,l}|$. By \eqref{eq:b2.dc.1}, we have
\beqr
\nonumber
(\overline{\Sigma}-\Sigma^0_b)_{k,l}&=&\frac{1}{m-1}\sum_{i=1}^m\left\{W_{ik}W_{il}-\left(\Sigma_b^0+n_i^{-1}\Sigma^0_{\veps}\right)_{k,l}\right\}\\ \nonumber
&&-\frac{m}{m-1}\left(\frac{1}{m}\sum_{i=1}^mW_{ik}\right)\left(\frac{1}{m}\sum_{i=1}^mW_{il}\right)\\
&&+\frac{(\Sigma_b^0)_{k,l}}{m-1}+\frac{m(\Sigma^0_{\veps})_{k,l}}{(m-1)n^\ast}. \label{eq:a.dc.1}
\eeqr

Then, with $|(\Sigma_b^0)_{k,l}|\le M_b$ and $|(\Sigma^0_{\veps})_{k,l}|\le M_{\veps}$, we have
\beqr
\nonumber
\max_{k,l}\left|(\overline{\Sigma}-\Sigma^0_b)_{k,l}\right|&\le&2\max_{k,l}\left|\frac{1}{m}\sum_{i=1}^m\left\{W_{ik}W_{il}-\left(\Sigma_b^0+n_i^{-1}\Sigma^0_{\veps}\right)_{k,l}\right\}\right|\\ \nonumber
&&+2\max_{k,l}\left|\left(\frac{1}{m}\sum_{i=1}^mW_{ik}\right)\left(\frac{1}{m}\sum_{i=1}^mW_{il}\right)\right|\\
&&+\frac{2M_b}{m}+\frac{2M_\veps}{n^\ast}. \label{eq:a.dc.2}
\eeqr

Following the steps in Lemma \ref{lem:max.b2}, with
\beqrs
\lambda_0=C_1\left(\frac{\log p}{m}\right)^{1/2}+\frac{M_b}{m}+\frac{M_\veps}{n^\ast}
\eeqrs
for a sufficiently large constant $C_1>0$, we have
\beqrs
\text{pr}\left\{\max_{k,l}\left|(\overline{\Sigma}-\Sigma_b^0)_{k,l}\right|>2\lambda_0\right\}\le4p^{-C_2},
\eeqrs
where $C_2>0$ only depends on $C_1$ and $\max(M_\veps,M_b)$.
\end{proof}

\begin{lemma}\label{lem:max.g}
Consider the true within-subject covariance $\Sigma^0_{\veps}$ with  $\max_k (\Sigma^0_\veps)_{k,k}\le M_{\veps}$ and the true between-subject covariance $\Sigma^0_b$ with  $\max_k (\Sigma^0_b)_{k,k}\le M_b$. Let
$$
\widetilde{\lambda}_b = C_1 \frac{\max_i n_i}{n_0} \left(\frac{\log p}{m}\right)^{1/2} + C_2 \frac{\left( N\log p\right)^{1/2}}{n_0 (N - m)}
+ \frac{(2N-n_0m)M_b}{2n_0 m} + \frac{M_\veps}{n_0 m}
$$
for sufficiently large $C_1, C_2 > 0$.
If $\log p\le m$, then $\widetilde{\Sigma}_b$  satisfies 
\beqrs
\text{pr}\left[\max_{k,l}\left|(\widetilde{\Sigma}_b - \Sigma_b^0)_{k,l}\right|> 2 \widetilde{\lambda}_b\right]\le8p^{-C_3}, 
\eeqrs
where $C_3> 0$ only depends on $C_1$, $C_2$ and $\max(M_\veps,M_b)$. 
\end{lemma}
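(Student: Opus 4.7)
The plan is to follow the decomposition-and-concentrate strategy of Lemma \ref{lem:max.b2}, adapted to the weighted structure of the MANOVA-type estimator. Writing $W_{ik}=\bar{Y}_{i\cdot k}$ and using the elementary identity $\sum_i n_i(\bar{\bm{Y}}_{i\cdot}-\bar{\bm{Y}}_{\cdot\cdot})(\bar{\bm{Y}}_{i\cdot}-\bar{\bm{Y}}_{\cdot\cdot})^\T = \sum_i n_iW_{ik}W_{il} - N^{-1}(\sum_i n_iW_{ik})(\sum_i n_iW_{il})$ entrywise, the starting decomposition is
\begin{align*}
n_0(\widetilde{\Sigma}_b-\Sigma_b^0)_{k,l}
&= \underbrace{\tfrac{1}{m-1}\textstyle\sum_i n_i[W_{ik}W_{il}-\mathrm{E}(W_{ik}W_{il})]}_{A_c}
 - \underbrace{\tfrac{1}{N(m-1)}(\textstyle\sum_i n_iW_{ik})(\textstyle\sum_i n_iW_{il})}_{B} \\
&\quad - (\widehat{\Sigma}_\veps-\Sigma_\veps^0)_{k,l} + \bigl(\tfrac{N}{m-1}-n_0\bigr)(\Sigma_b^0)_{k,l} + \tfrac{1}{m-1}(\Sigma_\veps^0)_{k,l},
\end{align*}
obtained by using $\mathrm{E}(W_{ik}W_{il})=(\Sigma_b^0+n_i^{-1}\Sigma_\veps^0)_{k,l}$ together with the defining identity $N/(m-1)-n_0=\sum_i n_i^2/(N(m-1))$.

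The key stochastic step is a weighted Bernstein inequality for $A_c$. Since $W_{ik}\in\mathcal{SG}(\sigma_{b,k}^2+n_i^{-1}\sigma_{\veps,k}^2)$, Lemma 2.7.7 of \citet{Vershyin:2018} gives $\|W_{ik}W_{il}-\mathrm{E}(W_{ik}W_{il})\|_{\psi_1}\le 2c_1 M_\ast$ uniformly in $i,k,l$, with $M_\ast=\max(M_b,M_\veps)$, exactly as in Lemma \ref{lem:max.b2}. Applying Theorem 2.8.2 of \citet{Vershyin:2018} to the weighted sum produces a tail bounded by $2\exp(-c\min\{t^2(m-1)^2/(K^2\sum_i n_i^2),\,t(m-1)/(K\max_i n_i)\})$. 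The crucial simplification is $\sum_i n_i^2\le N\max_i n_i$; choosing $t=C_1'\max_i n_i(\log p/m)^{1/2}$ and union-bounding over $p^2$ pairs (valid under $\log p\le m$) yields $\max_{k,l}|A_c|/n_0\le C_1(\max_i n_i/n_0)(\log p/m)^{1/2}$ with probability at least $1-2p^{-C}$, matching the first summand of $\widetilde{\lambda}_b$.

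For $B$, I will apply the general Hoeffding inequality (Theorem 2.6.2 of \citet{Vershyin:2018}) to each linear functional $\sum_i n_iW_{ik}$, whose sub-Gaussian variance proxy is at most $(\sum_i n_i^2)M_\ast\le N(\max_i n_i)M_\ast$. A union bound gives $\max_k|\sum_i n_iW_{ik}|^2\lesssim N(\max_i n_i)M_\ast\log p$, so $|B|/n_0\lesssim (\max_i n_i/n_0)(\log p/m)M_\ast$, which under $\log p\le m$ is strictly dominated by the Bernstein rate and absorbed into the first summand. The term $(\widehat{\Sigma}_\veps-\Sigma_\veps^0)_{k,l}/n_0$ is controlled directly by Lemma \ref{lem:max.r}, producing the $C_2(N\log p)^{1/2}/(n_0(N-m))$ contribution with probability at least $1-4p^{-C}$.

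The two deterministic remainders are handled by elementary algebra: for $m\ge 2$ the inequality $Nm/(m-1)\le 2N$ together with $N/(m-1)-n_0=\sum_i n_i^2/(N(m-1))$ gives $(N/(m-1)-n_0)M_b/n_0\le(2N-n_0m)M_b/(n_0m)$, and similarly $M_\veps/((m-1)n_0)\le 2M_\veps/(n_0m)$, each at most twice the corresponding summand of $\widetilde{\lambda}_b$. A union bound over the three stochastic events and summation of the contributions yields $|(\widetilde{\Sigma}_b-\Sigma_b^0)_{k,l}|\le 2\widetilde{\lambda}_b$ uniformly in $k,l$ with probability at least $1-8p^{-C_3}$. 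The main obstacle, and the conceptual departure from Lemma \ref{lem:max.b2}, is the weighted Bernstein step: the interplay between $\sum_i n_i^2$ and $\max_i n_i$ is precisely what produces the $\max_i n_i/n_0$ data-imbalance factor, providing the analytic explanation for the sharp degradation of $\widetilde{\Sigma}_b^+$ with $\max_i n_i/n_0$ observed in Fig.~\ref{fig:ferror} while $\widehat{\Sigma}_b^+$ is unaffected.
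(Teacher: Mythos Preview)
Your proposal is correct and follows essentially the same route as the paper: the same decomposition of $n_0(\widetilde{\Sigma}_b-\Sigma_b^0)$ into a centered weighted sum, a quadratic cross term, the $(\widehat{\Sigma}_\veps-\Sigma_\veps^0)/n_0$ piece handled by Lemma~\ref{lem:max.r}, and two deterministic remainders. The only cosmetic difference is that the paper absorbs the weight $n_i$ into the summand (bounding $\|n_iW_{ik}W_{il}-\cdot\|_{\psi_1}\le 2c_1(\max_i n_i)M_\ast$ and applying unweighted Bernstein over $m$ terms), whereas you keep the weights explicit and invoke the weighted form of Theorem~2.8.2 in \citet{Vershyin:2018} together with $\sum_i n_i^2\le N\max_i n_i$; both bookkeepings yield the identical $(\max_i n_i/n_0)(\log p/m)^{1/2}$ rate.
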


\begin{proof}
Consider
\beqr
\nonumber
\max_{k,l}|(\widetilde{\Sigma}_b-\Sigma_b^0)_{k,l}|&=&\max_{k,l}\left|\left(\frac{\overline{\Sigma}-\widehat{\Sigma}_\veps}{n_0}-\Sigma_b^0\right)_{k,l}\right|\\ \nonumber
&=&\max_{k,l}\left|\frac{(\overline{\Sigma}-n_0\Sigma_b^0-\Sigma_\veps^0)_{k,l}}{n_0}-\frac{(\widehat{\Sigma}_\veps-\Sigma_\veps^0)_{k,l}}{n_0}\right|\\
&\le&\max_{k,l}\left|\frac{(\overline{\Sigma}-n_0\Sigma_b^0-\Sigma_\veps^0)_{k,l}}{n_0}\right|+\max_{k,l}\left|\frac{(\widehat{\Sigma}_\veps-\Sigma_\veps^0)_{k,l}}{n_0}\right|. \label{eq:s.m.mdG1.0}
\eeqr

With $\bar{Y}_{i\cdot k}=W_{ik}$, we have
\beqrs
\bar{Y}_{\cdot\cdot k}&=&\frac{1}{N}\sum_{i=1}^mn_iW_{ik},\\
(\overline{\Sigma})_{k,l}&=&\frac{1}{m-1}\sum_{i=1}^mn_i\left(W_{ik}-\frac{1}{N}\sum_{i=1}^mn_iW_{ik}\right)\left(W_{il}-\frac{1}{N}\sum_{i=1}^mn_iW_{il}\right)\\
&=&\frac{1}{m-1}\sum_{i=1}^mn_iW_{ik}W_{il}-\frac{1}{(m-1)N}\left(\sum_{i=1}^mn_iW_{ik}\right)\left(\sum_{i=1}^mn_iW_{il}\right).
\eeqrs
Thus, we obtain
\beqrs
&&\frac{(\overline{\Sigma}-n_0\Sigma_b^0-\Sigma_\veps^0)_{k,l}}{n_0}\\
&=&\frac{1}{n_0(m-1)}\sum_{i=1}^m\{n_iW_{ik}W_{il}-(n_i\Sigma_b^0+\Sigma_\veps^0)_{k,l}\}\\
&&\qquad-\frac{1}{n_0(m-1)N}\left(\sum_{i=1}^mn_iW_{ik}\right)\left(\sum_{i=1}^mn_iW_{il}\right)\\
&&\qquad+\left\{\frac{N}{n_0(m-1)}-1\right\}(\Sigma_b^0)_{k,l}+\frac{1}{n_0(m-1)}(\Sigma^0_{\veps})_{k,l}.
\eeqrs
Then, for the first term in (\ref{eq:s.m.mdG1.0}), with $|(\Sigma_b^0)_{k,l}|\le M_b$ and $|(\Sigma^0_{\veps})_{k,l}|\le M_{\veps}$, we have
\beqr
\nonumber
&&\max_{k,l}\left|\frac{(\overline{\Sigma}-n_0\Sigma_b^0-\Sigma_\veps^0)_{k,l}}{n_0}\right|\\ \nonumber
&\le&\max_{k,l}\frac{2}{n_0m}\left|\sum_{i=1}^m\{n_iW_{ik}W_{il}-(n_i\Sigma_b^0+\Sigma_\veps^0)_{k,l}\}\right|\\
&&\qquad+\max_k\frac{2}{n_0mN}\left|\sum_{i=1}^mn_iW_{ik}\right|^2+\left\{\frac{2N}{n_0m}-1\right\}M_b+\frac{2}{n_0m}M_\veps. \label{eq:s.m.mdG1.1}
\eeqr

Recall that by assumptions $b_{ik}\in\mathcal{SG}(\sigma_{b,k}^2)$, i.e., $b_{ik}$ is sub-Gaussian with a variance factor $\sigma_{b,k}^2$ for $1\le i\le m, 1\le k\le p$. Then we have $W_{ik}\in{SG}(\sigma_{b,k}^2+n_i^{-1}\sigma_{\veps,k}^2)$. Let $\max_k \sigma_{\veps,k}^2=M_\veps$ and $\max_k \sigma_{b,k}^2=M_b$. Together with \eqref{eq:subGp1} and \eqref{eq:subGp2}, we get $\|n_iW_{ik}W_{il}-(n_i\Sigma_b^0+\Sigma_\veps^0)_{k,l}\|_{\psi_1}\le c_1(n_iM_b+M_\veps)$. Then, $n_iW_{ik}W_{il}-(n_i\Sigma_b^0+\Sigma_\veps^0)_{k,l}$ is sub-Exponential. With Bernstein's inequality, for any $k,l$, we have
\beqrs
\text{pr}\left[\left|\frac{1}{n_0m}\sum_{i=1}^m\left\{n_iW_{ik}W_{il}-(n_i\Sigma_b^0+\Sigma_\veps^0)_{k,l}\right\}\right|\ge t\right]\le 2\text{exp}\left\{-c_6\min\left(\frac{t^2n_0^2m}{K_4^2},\frac{tn_0m}{K_4}\right)\right\}, \label{eq:s.m.pbG1.0}
\eeqrs
where $c_6>0$, $K_4=\max_{i,k,l}\|n_iW_{ik}W_{il}-(n_i\Sigma_b^0+\Sigma_\veps^0)_{k,l}\|_{\psi_1}\le 2c_1n_uM_\ast$, $n_u=\max_in_i$, $M_\ast=\max(M_\veps,M_b)$.

Take $t=C_1n_u(2n_0)^{-1}(\log p/m)^{1/2}$ for a sufficiently large constant $C_1>0$. With $m\ge\log p$ and the union sum inequality, we obtain
\beqr
\nonumber
&&\text{pr}\left[\max_{k,l}\left|\frac{1}{n_0m}\sum_{i=1}^m\left\{n_iW_{ik}W_{il}-(n_i\Sigma_b^0+\Sigma_\veps^0)_{k,l}\right\}\right|\ge t\right]\\ \nonumber
&\le& 2p^2\text{exp}\left\{-c_6\min\left(\frac{t^2n_0^2m}{4c_1^2n_u^2M_\ast^2},\frac{tn_0m}{2c_1n_uM_\ast}\right)\right\}\\ \nonumber
&=&2\text{exp}\left[2\log p-\min\left\{\frac{c_6C_1^2}{16c_1^2M_\ast^2}\log p,\frac{c_6C_1}{4c_1M_\ast}(m\log p)^{1/2}\right\}\right]\\ 
&\le&2\text{exp}\left[\left\{2-\min\left(\frac{c_6C_1^2}{16c_1^2M_\ast^2},\frac{c_6C_1}{4c_1M_\ast}\right)\right\}\log p\right]. \label{eq:s.m.pbG1.4}
\eeqr

Then we will bound the second term in (\ref{eq:s.m.mdG1.1}). By the property of sub-Gaussian assumption, $n_iW_{ik}=n_ib_{ik}+\sum_{j=1}^{n_i}\veps_{ijk}\in\mathcal{SG}(n_i^2M_b+n_iM_{\veps})$. Then, according to the general Hoeffding’s inequality (Theorem 2.6.2 by \citet{Vershyin:2018}), we have
\beqrs
\nonumber
\text{pr}\left(\frac{1}{n_0mN}\left|\sum_{i=1}^mn_iW_{ik}\right|^2\ge t\right)&\le&\text{pr}\left\{\left|\sum_{i=1}^mn_iW_{ik}\right|\ge (tn_0mN)^{1/2}\right\}\\ \nonumber
&\le&2\text{exp}\left\{-\frac{c_7tn_0mN}{\sum_{i=1}^mc_0^2(n_i^2M_b+n_iM_{\veps})}\right\}\\ \nonumber
&\le&2\text{exp}\left\{-\frac{c_7tn_0mN}{\sum_{i=1}^mc_0^2(n_in_uM_b+n_iM_{\veps})}\right\}\\
&\le&2\text{exp}\left(-\frac{c_7tn_0m}{c_1n_uM_\ast}\right), \label{eq:s.m.pbG1.5}
\eeqrs
where $c_7>0$.

Then, take $t=C_1n_u(2n_0)^{-1}(\log p/m)^{1/2}$, with $m\ge\log p$, by the union sum inequality, we have
\beqr
\nonumber
\text{pr}\left(\max_{k}\frac{1}{n_0mN}\left|\sum_{i=1}^mn_iW_{ik}\right|^2\ge t\right)&\le& 2p\text{exp}\left(-\frac{c_7tn_0m}{c_1n_uM_\ast}\right)\\ \nonumber
&\le& 2p\text{exp}\left\{-\frac{c_7C_1}{2c_1M_\ast}(m\log p)^{1/2}\right\}\\
&\le&2\text{exp}\left\{\left(1-\frac{c_7C_1}{2c_1M_\ast}\right)\log p\right\} \label{eq:s.m.pbG1.7}
\eeqr
for a sufficiently large constant $C_1>0$.

To bound the second term in \eqref{eq:s.m.mdG1.0}, by Lemma \ref{lem:max.r}, for a sufficiently large constant $C_2>0$, we have
\beqr
\text{pr}\left\{\max_{k,l}\left|\frac{(\widehat{\Sigma}_\veps-\Sigma_\veps^0)_{k,l}}{n_0}\right|>2C_2\frac{(N\log p )^{1/2}}{n_0(N-m)}\right\}\le4p^{-C_3^{\prime}}, \label{eq:s.m.pbG1.8}
\eeqr
where $C_3^{\prime}>0$ only depends on $C_2$ and $M_\veps$. 

Then, with 
$$\widetilde{\lambda}_b=C_1\frac{\max n_i}{n_0}\left(\frac{\log p}{m}\right)^{1/2}+C_2\frac{(N\log p)^{1/2}}{n_0(N-m)}+\frac{(2N-n_0m)M_b}{2n_0m}+\frac{M_\veps}{n_0m}$$ 
for sufficiently large $C_1, C_2>0$, combining \eqref{eq:s.m.mdG1.0}-\eqref{eq:s.m.pbG1.8}, we obtain
\beqrs
\nonumber
&&\text{pr}\left\{\max_{k,l}\left|(\widetilde{\Sigma}_b-\Sigma_b^0)_{k,l}\right|>2\widetilde{\lambda}_b\right\}\\ \nonumber
&\le&\text{pr}\left[\max_{k,l}\left|\frac{2}{n_0m}\sum_{i=1}^m\left\{n_iW_{ik}W_{il}-(n_i\Sigma_b^0+\Sigma_\veps^0)_{k,l}\right\}\right|\ge C_1\frac{\max_in_i}{n_0}\left(\frac{\log p}{m}\right)^{1/2}\right]\\ \nonumber
&&\qquad+\text{pr}\left\{\max_{k}\frac{2}{n_0mN}\left|\sum_{i=1}^mn_iW_{ik}\right|^2\ge C_1\frac{\max_in_i}{n_0}\left(\frac{\log p}{m}\right)^{1/2}\right\}\\ \nonumber
&&\qquad+\text{pr}\left\{\max_{k,l}\left|\frac{(\widehat{\Sigma}_\veps-\Sigma_\veps^0)_{k,l}}{n_0}\right|>2C_2\frac{(N\log p )^{1/2}}{n_0(N-m)}\right\}\\ \nonumber
&\le&4p^{-C_3^{\dprime}}+4p^{-C_3^\prime}\\
&\le&8p^{-C_3}, \label{eq:s.m.pbG1.9}
\eeqrs
where $C_3^{\dprime}=\min\{c_6C_1^2(16c_1^2M_\ast^2)^{-1},c_6C_1(4c_1M_\ast)^{-1},c_7C_1(2c_1M_\ast)^{-1}+1\}-2$ and $C_3 = \min\{C_3^\prime,C_3^{\dprime}\}$.
\end{proof}

\begin{lemma}\label{lem:max.ae}
Consider the true within-subject covariance $\Sigma^0_{\veps}$ with  $\max_k (\Sigma^0_\veps)_{k,k}\le M_{\veps}$ and the true between-subject covariance $\Sigma^0_b$ with  $\max_k (\Sigma^0_b)_{k,k}\le M_b$. Let
\beqrs
\lambda_1 = C_1 \left(\frac{\log p}{m}\right)^{1/2} + M_b + \frac{(2-n^\ast)M_\veps}{2n^\ast } 
\eeqrs
for sufficiently large $C_1>0$, where $n^\ast=m/\sum_{i=1}^mn_i^{-1}$. If $\log p\le m$, then $\overline{\Sigma}$ satisfies 
\beqrs
\text{pr}\left\{\max_{k,l}\left|(\overline{\Sigma}-\Sigma_\veps^0)_{k,l}\right|>2\lambda_1\right\}\le4p^{-C_2}
\eeqrs
where $C_2> 0$ only depends on $C_1$ and $\max(M_\veps,M_b)$.
\end{lemma}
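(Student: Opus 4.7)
My approach would be to mirror the proof of Lemma \ref{lem:max.a} almost verbatim, since $\overline{\Sigma}$ is exactly the same random object --- only the deterministic comparison target has changed from $\Sigma_b^0$ to $\Sigma_\veps^0$. Setting $W_{ik} = \bar{Y}_{i\cdot k} = b_{ik} + n_i^{-1}\sum_{j=1}^{n_i}\veps_{ijk}$ so that $E(W_{ik}W_{il}) = (\Sigma_b^0 + n_i^{-1}\Sigma_\veps^0)_{k,l}$, I would start from the decomposition \eqref{eq:a.dc.1} and simply subtract $(\Sigma_\veps^0)_{k,l}$ in place of $(\Sigma_b^0)_{k,l}$ to obtain
\beqrs
(\overline{\Sigma}-\Sigma_\veps^0)_{k,l} = A_{k,l} - B_{k,l} + D_{k,l},
\eeqrs
where $A_{k,l} = (m-1)^{-1}\sum_{i=1}^m\{W_{ik}W_{il} - (\Sigma_b^0 + n_i^{-1}\Sigma_\veps^0)_{k,l}\}$ is a mean-zero sum, $B_{k,l} = m(m-1)^{-1}\bar{W}_k\bar{W}_l$ is the product-of-means term, and $D_{k,l} = \frac{m}{m-1}(\Sigma_b^0)_{k,l} + \frac{m-(m-1)n^\ast}{(m-1)n^\ast}(\Sigma_\veps^0)_{k,l}$ is the deterministic residual. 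Only $D_{k,l}$ differs from the decomposition appearing in the proof of Lemma \ref{lem:max.a}, and, crucially, its $\Sigma_\veps^0$-coefficient does not vanish as $m\to\infty$, which is the precise source of inconsistency of $\overline{\Sigma}^+$ for estimating $\Sigma_\veps^0$.

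The random parts $A$ and $B$ are controlled exactly as in the proof of Lemma \ref{lem:max.a}. Sub-Gaussianity of $b_{ik}$ and $\veps_{ijk}$ gives $W_{ik}\in\mathcal{SG}(M_b + n_i^{-1}M_\veps)\subseteq\mathcal{SG}(2M_\ast)$ with $M_\ast = \max(M_\veps, M_b)$, so $W_{ik}W_{il} - E(W_{ik}W_{il})$ is centered sub-Exponential with $\psi_1$-norm $\lesssim M_\ast$. Applying Bernstein's inequality and a union bound over the $p^2$ entries, exactly as in \eqref{eq:b2.prob2}, gives $\max_{k,l}|A_{k,l}| \le \tfrac{1}{2}C_1(\log p/m)^{1/2}$ with probability at least $1-2p^{-C_2}$ under $\log p\le m$. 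Applying the general Hoeffding inequality (Theorem 2.6.2 in \citet{Vershyin:2018}) to $\sum_i W_{ik}$ followed by a union bound, as in \eqref{eq:b2.prob3}, yields the same rate for $\max_{k,l}|B_{k,l}|$ with probability at least $1-2p^{-C_2}$. For the deterministic bias, positive-definiteness of $\Sigma_b^0, \Sigma_\veps^0$ together with the Cauchy--Schwarz bound $|\Sigma_{k,l}|\le(\Sigma_{k,k}\Sigma_{l,l})^{1/2}$ gives $|(\Sigma_b^0)_{k,l}|\le M_b$ and $|(\Sigma_\veps^0)_{k,l}|\le M_\veps$; combined with $m/(m-1)\le 2$ for $m\ge 2$ and a short algebraic simplification of $(m-(m-1)n^\ast)/((m-1)n^\ast)$, this fits $\max_{k,l}|D_{k,l}|$ into the constant portion of $2\lambda_1$. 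A final union bound then produces $\max_{k,l}|(\overline{\Sigma}-\Sigma_\veps^0)_{k,l}| \le 2\lambda_1$ with probability at least $1-4p^{-C_2}$, as claimed.

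\textbf{Main obstacle.} No essentially new probabilistic ingredient is required --- the concentration mechanics reuse the Bernstein/Hoeffding arguments already deployed in Lemmas \ref{lem:max.r}, \ref{lem:max.b2}, and \ref{lem:max.a}. The only care needed is in the algebraic bookkeeping of the deterministic bias $D_{k,l}$, since the $\Sigma_\veps^0$-coefficient does not simplify cleanly for all $(m,n^\ast)$: one must track the sign and magnitude of $(m-(m-1)n^\ast)/((m-1)n^\ast)$ under $m\ge 2$ to match the constant written in $\lambda_1$. Because this bias is $O(1)$ rather than $O(m^{-1})$, it cannot be absorbed into the random rate and must be carried explicitly; this is exactly the structural feature that makes $\overline{\Sigma}^+$ inconsistent for $\Sigma_\veps^0$.
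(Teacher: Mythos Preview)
Your proposal is correct and follows essentially the same route as the paper's proof: the paper writes $(\overline{\Sigma}-\Sigma^0_\veps)_{k,l}=(\overline{\Sigma}-\Sigma^0_b)_{k,l}+(\Sigma^0_b)_{k,l}-(\Sigma^0_\veps)_{k,l}$, invokes the decomposition \eqref{eq:a.dc.1}, and arrives at exactly your $A_{k,l}$, $B_{k,l}$, and $D_{k,l}$, after which it reuses the Bernstein/Hoeffding bounds \eqref{eq:b2.prob2}--\eqref{eq:b2.prob3} from Lemma~\ref{lem:max.b2} and handles the deterministic bias via $m/(m-1)\le 2$ to obtain the constants $2M_b + (2-n^\ast)M_\veps/n^\ast$. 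Your identification of $D_{k,l}$ as the only new (and non-vanishing) piece is precisely the paper's point.
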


\begin{proof}
Now, we will  consider the convergence rate of $\max_{k,l}|(\overline{\Sigma}-\Sigma^0_\veps)_{k,l}|$. Note that $(\overline{\Sigma}-\Sigma^0_\veps)_{k,l}=(\overline{\Sigma}-\Sigma^0_b)_{k,l}+(\Sigma^0_b)_{k,l}-(\Sigma^0_\veps)_{k,l}$. Then, by \eqref{eq:a.dc.1}, with $|(\Sigma_b^0)_{k,l}|\le M_b$ and $|(\Sigma^0_{\veps})_{k,l}|\le M_{\veps}$, we have
\beqr
\nonumber
\max_{k,l}\left|(\overline{\Sigma}-\Sigma^0_\veps)_{k,l}\right|&\le&2\max_{k,l}\left|\frac{1}{m}\sum_{i=1}^m\left\{W_{ik}W_{il}-\left(\Sigma_b^0+n_i^{-1}\Sigma^0_{\veps}\right)_{k,l}\right\}\right|\\ \nonumber
&&+2\max_{k,l}\left|\left(\frac{1}{m}\sum_{i=1}^mW_{ik}\right)\left(\frac{1}{m}\sum_{i=1}^mW_{il}\right)\right|\\
&&+2M_b+\frac{(2-n^\ast)M_\veps}{n^\ast}. \label{eq:a.aec.2}
\eeqr

Following the steps in Lemma \ref{lem:max.b2}, with
\beqrs
\lambda_1=C_1\left(\frac{\log p}{m}\right)^{1/2}+M_b+\frac{(2-n^\ast)M_\veps}{2n^\ast}
\eeqrs
for a sufficiently large constant $C_1>0$, we have
\beqrs
\text{pr}\left\{\max_{k,l}\left|(\overline{\Sigma}-\Sigma_\veps^0)_{k,l}\right|>2\lambda_1\right\}\le4p^{-C_2},
\eeqrs
where $C_2>0$ only depends on $C_1$ and $\max(M_\veps,M_b)$.
\end{proof}

\subsection{Proof of Theorem 1}
\begin{proof}
Define $\Delta_{\veps}=\Sigma_\veps-\Sigma^0_{\veps}$ and $F_\veps(\Delta_{\veps})=\|\Delta_{\veps}+\Sigma^0_{\veps}-\widehat{\Sigma}_\veps\|_F^2/2+\lambda_\veps|\Delta_{\veps}+\Sigma^0_{\veps}|_1$, then the objective function \eqref{obj:admm0} is equivalent to 
\beqrs
\min_{\Delta_{\veps}:\Delta_{\veps}=\Delta_{\veps}^\T,\Delta_{\veps}+\Sigma^0_{\veps}\succeq\delta I}F_\veps(\Delta_{\veps}).
\eeqrs
Consider the set
\beqr
\{\Delta_{\veps}:\Delta_{\veps}=\Delta_{\veps}^\T,\Delta_{\veps}+\Sigma^0_{\veps}\succeq\delta I,\|\Delta_{\veps}\|_F=5\lambda_\veps(ps_\veps)^{1/2}\}. \label{eq:set_r}
\eeqr 
According to \cite{Xue.etal:2012}, under the probability event $\{|(\widehat{\Sigma}_\veps-\Sigma_\veps^0)_{,k,l}|\le\lambda_\veps,\forall(i,j)\}$, we have
\beqrs
F_\veps(\Delta_{\veps})-F_\veps(\bm{0})&\ge&\frac{1}{2}\|\Delta_{\veps}\|_F^2-2\lambda_\veps\left[\sum_{k,l=1}^p1\{(\Sigma_\veps^0)_{k,l}\ne0\}\right]^{1/2}\|\Delta_{\veps}\|_F\\
&\ge&\frac{1}{2}\|\Delta_{\veps}\|_F^2-2\lambda_\veps(ps_\veps)^{1/2}\|\Delta_{\veps}\|_F\\
&=&\frac{5}{2}\lambda_\veps^2ps_\veps\\
&>&0.
\eeqrs
Note that $F_\veps(\Delta_{\veps})$ is a convex function and $F_\veps(\widehat{\Delta}_\veps)\le F_\veps(\bm{0})=0$. Then, the minimizer $\widehat{\Delta}_\veps$ must be inside the sphere (\ref{eq:set_r}). Hence, we have
\beqrs
\nonumber
&&\text{pr}\left\{\left\|\widehat{\Sigma}_\veps^{+}-\Sigma^0_{\veps}\right\|_F\le5\lambda_\veps(ps_\veps)^{1/2}\right\}\\ \nonumber
&\ge&1-\text{pr}\left\{\max_{k,l}\left|(\widehat{\Sigma}_\veps-\Sigma_\veps^0)_{k,l}\right|>\lambda_\veps\right\}\\
&\ge&1-4p^{-C_2}. \label{eq:s.m.pbR0.0}
\eeqrs
\end{proof}

\setcounter{theorem}{4}
\begin{theorem}\label{thm:max.ae.c}
Consider the true between-subject covariance matrix $\Sigma^0_{b} \in \mathcal{U}(M_{b}, s_{b})$ and the true within-subject covariance matrix $\Sigma^0_{\veps} \in \mathcal{U}(M_{\veps}, s_{\veps})$. Let
\beqrs
\lambda_1 = C_1 \left(\frac{\log p}{m}\right)^{1/2} 
+ M_b +   \frac{(2-n^\ast)M_\veps}{2n^\ast}
\eeqrs
be the value of the tuning parameter $\lambda$ in \eqref{obj:admm0} for sufficiently large $C_1 > 0$, and the same $n^\ast$ defined in Theorem 2.
If $\log p\le m$, then the naive estimator $\overline{\Sigma}^+$ satisfies 
$$
\left\|\overline{\Sigma}^{+}-\Sigma^0_\veps\right\|_F\le10\lambda_1 (ps_\veps)^{1/2} 
$$
with probability at least $1- 4p^{-C_2}$, where $C_2> 0$ only depends on $C_1$ and $\max(M_\veps,M_b)$. 
\end{theorem}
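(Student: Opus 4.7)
The plan is to imitate the proof of Theorem 1 essentially verbatim, replacing $\widehat{\Sigma}_\veps$ by $\overline{\Sigma}$ and invoking Lemma \ref{lem:max.ae} in place of Lemma \ref{lem:max.r}. Reparametrize by $\Delta_\veps = \Sigma_\veps - \Sigma^0_\veps$ and set
\[
F_\veps(\Delta_\veps) = \tfrac{1}{2}\|\Delta_\veps + \Sigma^0_\veps - \overline{\Sigma}\|_F^2 + \lambda_1 \,|\Delta_\veps + \Sigma^0_\veps|_1,
\]
so that minimizing $F_\veps$ over symmetric $\Delta_\veps$ with $\Delta_\veps + \Sigma^0_\veps \succeq \delta I$ is the same optimization as \eqref{obj:admm0} with $B=\overline{\Sigma}$ and $\lambda=\lambda_1$. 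Denote its minimizer by $\widehat{\Delta}_\veps = \overline{\Sigma}^+ - \Sigma^0_\veps$; the goal is to prove $\|\widehat{\Delta}_\veps\|_F \le 10 \lambda_1 (p s_\veps)^{1/2}$.

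Next, condition on the high-probability event $E = \{\max_{k,l}|(\overline{\Sigma} - \Sigma^0_\veps)_{k,l}| \le 2\lambda_1\}$, which holds with probability at least $1 - 4p^{-C_2}$ by Lemma \ref{lem:max.ae}. Expanding $F_\veps(\Delta_\veps) - F_\veps(\mathbf{0})$, bounding the linear cross term by $\|\overline{\Sigma} - \Sigma^0_\veps\|_\infty \cdot |\Delta_\veps|_1$, and decomposing the $\ell_1$ penalty along the support of $\Sigma^0_\veps$ exactly as in \citet{Xue.etal:2012} and in the proof of Theorem 1, one obtains on $E$
\[
F_\veps(\Delta_\veps) - F_\veps(\mathbf{0}) \;\ge\; \tfrac{1}{2}\|\Delta_\veps\|_F^2 \;-\; 4\lambda_1 \Bigl[\sum_{k,l=1}^p 1\{(\Sigma^0_\veps)_{k,l}\ne 0\}\Bigr]^{1/2}\|\Delta_\veps\|_F \;\ge\; \tfrac{1}{2}\|\Delta_\veps\|_F^2 - 4\lambda_1 (p s_\veps)^{1/2}\|\Delta_\veps\|_F,
\]
where the coefficient $4$ combines the factor $2$ from the support-based decomposition of the penalty with the factor $2$ from the sup-norm bound on $E$, and the sparsity bound $\sum_{k,l} 1\{(\Sigma^0_\veps)_{k,l}\ne 0\}\le p s_\veps$ uses $\Sigma^0_\veps \in \mathcal{U}(M_\veps,s_\veps)$. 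Evaluated on the sphere $\|\Delta_\veps\|_F = 10\lambda_1(p s_\veps)^{1/2}$, the right-hand side equals $10\lambda_1^2 p s_\veps > 0$. Convexity of $F_\veps$ together with $F_\veps(\widehat{\Delta}_\veps) \le F_\veps(\mathbf{0}) = 0$ then forces $\widehat{\Delta}_\veps$ to lie strictly inside the sphere, so $\|\overline{\Sigma}^+ - \Sigma^0_\veps\|_F \le 10 \lambda_1 (p s_\veps)^{1/2}$ on $E$, which is the claimed bound.

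There is no genuinely new technical obstacle: the entire content of the theorem is already packaged inside Lemma \ref{lem:max.ae}, whose definition of $\lambda_1$ must carry the deterministic bias $\mathrm{E}(\overline{\Sigma}) - \Sigma^0_\veps = \Sigma^0_b + (n^\ast)^{-1}\Sigma^0_\veps - \Sigma^0_\veps$ into the terms $M_b$ and $(2-n^\ast)M_\veps/(2n^\ast)$ that do not vanish with $m$. The convex-analysis step is oblivious to whether $B$ is unbiased, so it converts the sup-norm bound into a Frobenius bound with no change from the Theorem 1 template; the real message of the theorem — that $\overline{\Sigma}^+$ is inconsistent for $\Sigma^0_\veps$ because $\lambda_1$ stays bounded away from zero — is encoded in that non-vanishing component rather than in this proof step.
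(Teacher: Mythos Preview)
Your proposal is correct and follows exactly the approach the paper intends: the paper's proof of Theorem~\ref{thm:max.ae.c} is the one-line remark that it ``follows straightforwardly from Theorem~1,'' i.e., replace Lemma~\ref{lem:max.r} by Lemma~\ref{lem:max.ae} and rerun the convexity/sphere argument, with the factor $2$ in the sup-norm bound propagating to the factor $10$ in the Frobenius bound. One cosmetic quibble: $F_\veps(\mathbf{0})$ is not literally $0$ (this slip also appears in the paper's proof of Theorem~1), but the argument only needs $F_\veps(\widehat{\Delta}_\veps)\le F_\veps(\mathbf{0})$, so nothing is affected.
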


The proof of Theorem 2, 3, 4 and 5 follow straightforwardly from Theorem 1.

\subsection{Corollaries}
\label{sub:cor}
In some scenarios, estimation of between-subject and within-subject correlation matrices (instead of the covariance matrices) is of interest. And the sparse and positive-definite estimate of $R_\veps$, denoted as $\widehat{R}_\veps^+$, and of $\Sigma_b$, denoted as $\widehat{R}_b^+$ are defined as solution of \eqref{obj:admm0} with $B=\widehat{R}_\veps=D_\veps^{-1/2}\widehat{\Sigma}_\veps D_\veps^{-1/2}$ and $B=\widehat{R}_b=D_b^{-1/2}\widehat{\Sigma}_b D_b^{-1/2}$, where $D_\veps=\text{diag}\{(\widehat{\Sigma}_\veps)_{1,1},\ldots,(\widehat{\Sigma}_\veps)_{p,p}\}$ and $D_b=\text{diag}\{(\widehat{\Sigma}_b)_{1,1},\ldots,(\widehat{\Sigma}_b)_{p,p}\}$.

\begin{corollary}\label{cor:max.r.0.1}
Under conditions of Theorem 1, if $\min_k(\Sigma_\veps^0)_{k,k}$ is bounded from below, then 
\beqrs
\left\|\widehat{R}_\veps^+-R^0_{\veps}\right\|_F=O_P\left\{\frac{(ps_\veps N\log p)^{1/2}}{N-m}\right\}, \label{eq:thm.rmatrix.bound.c1.1}
\eeqrs
uniformly on $\Sigma^0_{\veps}\in\mathcal{U}(M_\veps,s_\veps)$, as $N,m\to\infty$.
\end{corollary}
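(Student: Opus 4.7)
The plan is to reduce Corollary 1 to Theorem 1 by transferring the max-norm concentration of $\widehat{\Sigma}_\veps$ around $\Sigma_\veps^0$ (Lemma 1) to an analogous concentration of $\widehat{R}_\veps$ around the true correlation matrix $R_\veps^0 = (D_\veps^0)^{-1/2}\Sigma_\veps^0 (D_\veps^0)^{-1/2}$, where $D_\veps^0 = \mathrm{diag}\{(\Sigma_\veps^0)_{1,1},\ldots,(\Sigma_\veps^0)_{p,p}\}$. Note that $R_\veps^0 \in \mathcal{U}(1, s_\veps)$: diagonal rescaling preserves the zero pattern, and the diagonal of $R_\veps^0$ equals one. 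Inspecting the proof of Theorem 1, the only randomness used is the event $\{\max_{k,l}|(\widehat{\Sigma}_\veps - \Sigma_\veps^0)_{k,l}| \le \lambda_\veps\}$; the remainder is a deterministic convex-analysis argument on the objective $F$. Hence once a bound of the form $\max_{k,l}|(\widehat{R}_\veps - R_\veps^0)_{k,l}| \le C\lambda_\veps$ is established on the event of Lemma 1, the same argument produces the Frobenius-norm control for $\widehat{R}_\veps^+$.

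The key technical step is thus the entry-wise bound. Let $c_\veps > 0$ be the assumed lower bound for $\min_k(\Sigma_\veps^0)_{k,k}$. On the event of Lemma 1, $|(\widehat{\Sigma}_\veps)_{k,k} - (\Sigma_\veps^0)_{k,k}| \le \lambda_\veps$ for every $k$; since $\lambda_\veps \to 0$ under $\log p \le N$, this forces $(\widehat{\Sigma}_\veps)_{k,k} \ge c_\veps/2$ uniformly, so both $D_\veps$ and $D_\veps^0$ are bounded away from zero. Then, by adding and subtracting $(\widehat{\Sigma}_\veps)_{k,l}/\sqrt{(\Sigma_\veps^0)_{k,k}(\Sigma_\veps^0)_{l,l}}$, I decompose
$(\widehat{R}_\veps - R_\veps^0)_{k,l}$
into (i) $(\widehat{\Sigma}_\veps - \Sigma_\veps^0)_{k,l}/\sqrt{(\Sigma_\veps^0)_{k,k}(\Sigma_\veps^0)_{l,l}}$, bounded by $\lambda_\veps/c_\veps$; and (ii) $(\widehat{\Sigma}_\veps)_{k,l}$ times the difference of $(xy)^{-1/2}$ evaluated at the sample versus true diagonals. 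A first-order Taylor (or mean-value) expansion of $(xy)^{-1/2}$ about $((\Sigma_\veps^0)_{k,k}, (\Sigma_\veps^0)_{l,l})$, using the uniform lower bounds on the diagonals and the crude size bound $|(\widehat{\Sigma}_\veps)_{k,l}| \le M_\veps + \lambda_\veps$ (Cauchy--Schwarz on the diagonal bound plus Lemma 1), reduces (ii) to a constant multiple of $\lambda_\veps$. Combining the two pieces gives the advertised bound with a constant $C$ that depends only on $c_\veps$ and $M_\veps$.

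With this in hand, repeat the convex-analysis argument of Theorem 1 verbatim, with $\Sigma_\veps^0$ replaced by $R_\veps^0 \in \mathcal{U}(1, s_\veps)$, $\widehat{\Sigma}_\veps$ replaced by $\widehat{R}_\veps$, and $\lambda_\veps$ replaced by $C\lambda_\veps$: this yields $\|\widehat{R}_\veps^+ - R_\veps^0\|_F \le 5C\lambda_\veps(ps_\veps)^{1/2}$ on the same high-probability event, which, after substituting $\lambda_\veps = C_1(N\log p)^{1/2}/(N-m)$, is exactly the rate asserted in Corollary 1. The only place a bit of care is needed is the Taylor-expansion step for the reciprocal-square-root factor, but this is a standard plug-in-estimator manoeuvre once the diagonals are bounded away from zero, so no essential obstacle is anticipated.
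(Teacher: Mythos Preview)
Your proposal is correct and follows essentially the same approach as the paper: first transfer the entrywise concentration of $\widehat{\Sigma}_\veps$ about $\Sigma_\veps^0$ (Lemma~\ref{lem:max.r}) to entrywise concentration of $\widehat{R}_\veps$ about $R_\veps^0$, then rerun the convex-analysis argument of Theorem~\ref{thm:max.r.c}. The only cosmetic difference is that the paper outsources the transfer step to Lemma~2 of \citet{Cui.etal:2016}, whereas you carry it out directly via the add-and-subtract decomposition and a mean-value bound on $(xy)^{-1/2}$; the content is the same.
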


\begin{proof} 
By Lemma \ref{lem:max.r}, we have
\beqr
\text{pr}\left\{\max_{k,l}\left|(\widehat{\Sigma}_\veps-\Sigma_\veps^0)_{k,l}\right|>C_1\frac{(N\log p)^{1/2}}{N-m}\right\}=o(1). \label{eq:s.m.pbRR0}
\eeqr

By Lemma 2 in \citet{Cui.etal:2016}, with \eqref{eq:s.m.pbRR0} and the fact $(\widehat{R}_\veps)_{k,l}=(\widehat{\Sigma}_\veps)_{k,l}/\{(\widehat{\Sigma}_\veps)_{k,k}(\widehat{\Sigma}_\veps)_{l,l}\}^{1/2}$, for a sufficiently large constant $C_1^\prime>0$, we have
\beqrs
\text{pr}\left\{\max_{k,l}\left|(\widehat{R}_\veps-R_\veps^0)_{k,l}\right|>C_1^\prime\frac{(N\log p)^{1/2}}{N-m}\right\}=o(1). \label{eq:s.m.pbRR1}
\eeqrs

Following the steps in proof of Theorem 1, it is easily shown that 
\beqrs
\left\|\widehat{R}_\veps^+-R^0_{\veps}\right\|_F=O_P\left\{\frac{(ps_\veps N\log p)^{1/2}}{N-m}\right\}.
\eeqrs
\end{proof}

\begin{corollary}\label{cor:max.g.0.1}
Under conditions of Theorem 2, if $\min_k(\Sigma_\veps^0)_{k,k}$ and $\min_k(\Sigma_b^0)_{k,k}$ are bounded from below, then 
\beqrs
\left\|\widehat{R}_{b}^+-R^0_b\right\|_F=O_P\left[(ps_b)^{1/2}\left\{C_1^\prime\left(\frac{\log p}{m}\right)^{1/2}+C_2^\prime\frac{(N\log p)^{1/2}}{(N-m)n_0}\right\}\right], \label{eq:thm.gmatrix.bound.c1.0}
\eeqrs
uniformly on $\Sigma^0_{\veps}\in\mathcal{U}(M_\veps,s_\veps)$ and $\Sigma^0_b\in\mathcal{U}(M_b,s_b)$, for some large $C_1^\prime, C_2^\prime>0$, as $m,n\to\infty$.
\end{corollary}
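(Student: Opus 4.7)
The plan is to mirror the proof of Corollary~\ref{cor:max.r.0.1}: first extract an entrywise deviation bound for $\widehat{\Sigma}_b$ from Lemma~\ref{lem:max.b2}, then convert it to an entrywise bound on $\widehat{R}_b-R_b^0$ using Lemma~2 of \citet{Cui.etal:2016}, and finally rerun the sphere argument from the proof of Theorem~\ref{thm:max.r.c} on the problem \eqref{obj:admm0} with input $B=\widehat{R}_b$.

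For the first two steps, Lemma~\ref{lem:max.b2} yields $\max_{k,l}|(\widehat{\Sigma}_b-\Sigma_b^0)_{k,l}| \le 2\lambda_b$ with high probability, where $\lambda_b$ is defined in Theorem~\ref{thm:max.g2.c}. Under the added assumption that $\min_k (\Sigma_b^0)_{k,k}$ is bounded from below (and similarly for $\Sigma_\veps^0$), the deterministic terms $M_b/m$ and $M_\veps/(mn^\ast)$ in $\lambda_b$ are of smaller order than the two stochastic terms and can be absorbed into the leading constants. Writing $(\widehat{R}_b)_{k,l} = (\widehat{\Sigma}_b)_{k,l}/\{(\widehat{\Sigma}_b)_{k,k}(\widehat{\Sigma}_b)_{l,l}\}^{1/2}$ and invoking Lemma~2 of \citet{Cui.etal:2016} together with the concentration of the sample diagonals around their bounded-below population values, for sufficiently large $C_1^\prime, C_2^\prime > 0$ one obtains
\beqrs
\text{pr}\left\{\max_{k,l}\bigl|(\widehat{R}_b-R_b^0)_{k,l}\bigr| > C_1^\prime\left(\frac{\log p}{m}\right)^{1/2} + C_2^\prime \frac{(N\log p)^{1/2}}{(N-m)n_0}\right\} = o(1).
\eeqrs

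Finally, I would apply the convexity-plus-sphere argument used in Theorem~\ref{thm:max.r.c} to \eqref{obj:admm0} with $B=\widehat{R}_b$ and tuning parameter $\lambda$ equal to the right-hand side inside the brace above. Since $\Sigma_b^0 \in \mathcal{U}(M_b, s_b)$, the true correlation $R_b^0$ inherits the row-wise sparsity level $s_b$ (with diagonal bound $1$), so on the entrywise-deviation event the minimizer of $F(\Delta)=\tfrac{1}{2}\|\Delta+R_b^0-\widehat{R}_b\|_F^2+\lambda|\Delta+R_b^0|_1$ must sit inside the sphere $\|\Delta\|_F = 5\lambda(ps_b)^{1/2}$, which delivers the claimed rate.

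The main obstacle is the second step: one must verify that both the deterministic bias components in $\lambda_b$ and the multiplicative perturbation from replacing $(\Sigma_b^0)_{k,k}$ by $(\widehat{\Sigma}_b)_{k,k}$ in the denominator contribute only constants, not additional powers of $m$, $N$, or $p$, so that the two-term stochastic rate is preserved after renormalization. Once this is established, the remainder is a mechanical adaptation of the arguments in Corollary~\ref{cor:max.r.0.1} and Theorem~\ref{thm:max.r.c}, since the positive-definite constraint enters the sphere argument in exactly the same way.
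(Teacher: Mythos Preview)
Your proposal is correct and follows essentially the same approach as the paper: the paper does not give a separate proof of Corollary~\ref{cor:max.g.0.1}, but its proof of Corollary~\ref{cor:max.r.0.1} establishes exactly the template you describe---an entrywise deviation bound from the relevant lemma, conversion to correlations via Lemma~2 of \citet{Cui.etal:2016}, and then the sphere argument of Theorem~\ref{thm:max.r.c}. Corollary~\ref{cor:max.g.0.1} is presented as following analogously, with Lemma~\ref{lem:max.b2} playing the role of Lemma~\ref{lem:max.r}, which is precisely what you propose.
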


\section{More Results on Numerical Study}
\label{sec:more_sim}
\subsection{Unconstrained estimators versus constrained estimators}
\label{sec:2compares}
We generate 100 independent data sets for both balanced Model 1 and Model 2 with $n_i=2$ and $m=100$. We compare the performance of the unconstrained estimators, $\mathcal{S}_{\lambda}(\widehat{\Sigma}_\veps)$ and $\mathcal{S}_{\lambda}(\widehat{\Sigma}_{b})$, and the constrained estimators, $\widehat{\Sigma}_\veps^+$ and $\widehat{\Sigma}_{b}^+$, in terms of  estimation errors and the percentage of positive definite estimators, where $\mathcal{S}_{\lambda}()$ is the soft-thresholding operator defined in Section \ref{sec:method}. The simulation results are summarized in Table \ref{tab:error}. In general, the constrained estimators have slightly better performance in terms of estimation errors. In addition, we demonstrate that the positive definite constraint is crucial by observing that in most of the cases the unconstrained estimator are not guaranteed to be positive definite, making them less qualified for interpretation or downstream statistical tasks. 

\begin{table}[H]
\caption{Comparison of the unconstrained and constrained estimators under the balanced setting. Each metric is averaged over 100 replicates with the standard error shown in the parentheses. Comparisons are in terms of the estimation errors ($F$-error and $L_2$-error) and the percentage of positive definite estimators.}
\begin{tabular}{llcccc}
    & &\multicolumn{2}{c}{Model 1} & \multicolumn{2}{c}{Model 2}\\
    & $p$ & 100 & 200 & 100 & 200\\
    \multicolumn{6}{c}{Within-Subject}\\
    \multirow{2}{*}{$F$-error}      & $\mathcal{S}_{\lambda}(\widehat{\Sigma}_\veps)$ & 7.1804 (0.0562) & 11.4040 (0.0490) & 5.3956 (0.0202) & 8.3116 (0.0159)\\
                                    & $\widehat{\Sigma}_\veps^+$                      & 7.0548 (0.0552) & 11.1804 (0.0490) & 5.3956 (0.0202) & 8.3116 (0.0159)\\
    \multirow{2}{*}{$L_2$-error}    & $\mathcal{S}_{\lambda}(\widehat{\Sigma}_\veps)$ & 3.6179 (0.0451) & 4.2217 (0.0282)  & 2.7131 (0.0115) & 2.1257 (0.0083)\\
                                    & $\widehat{\Sigma}_\veps^+$                      & 3.5553 (0.0438) & 4.1564 (0.0286)  & 2.7131 (0.0115) & 2.1257 (0.0083)\\
    \multirow{2}{*}{PD\%}           & $\mathcal{S}_{\lambda}(\widehat{\Sigma}_\veps)$ & 18\%            & 3\%              & 100\%           & 100\%          \\
                                    & $\widehat{\Sigma}_\veps^+$                      & 100\%           & 100\%            & 100\%           & 100\%          \\
    \multicolumn{6}{c}{Between-Subject}\\
    \multirow{2}{*}{$F$-error}      & $\mathcal{S}_{\lambda}(\widehat{\Sigma}_b)$    & 10.8195 (0.0611) & 17.0538 (0.0416) & 7.6064 (0.0212)  & 11.6116 (0.0187)\\
                                    & $\widehat{\Sigma}_{b}^+$                       & 10.1304 (0.0635) & 16.1446 (0.0436) & 7.5382 (0.0222)  & 11.6005 (0.0139)\\
    \multirow{2}{*}{$L_2$-error}    & $\mathcal{S}_{\lambda}(\widehat{\Sigma}_b)$    & 4.5419 (0.0447)  & 5.3739 (0.0258)  & 2.3508 (0.0104)  & 2.5681 (0.0051) \\
                                    & $\widehat{\Sigma}_{b}^+$                       & 4.2857 (0.0467)  & 5.0994 (0.0257)  & 2.3143 (0.0104)  & 2.5358 (0.0046) \\
    \multirow{2}{*}{PD\%}           & $\mathcal{S}_{\lambda}(\widehat{\Sigma}_b)$    & 0\%              & 0\%              & 7\%              & 12\%            \\
                                    & $\widehat{\Sigma}_{b}^+$                       & 100\%            & 100\%            & 100\%            & 100\%           \\
\end{tabular}
\label{tab:error}
\begin{tablenotes}
      \small 
      \item $\mathcal{S}_{\lambda}(\widehat{\Sigma}_\veps)$ and $\widehat{\Sigma}_\veps^+$: unconstrained and constrained estimators for within-subject covariance; $\mathcal{S}_{\lambda}(\widehat{\Sigma}_b)$ and $\widehat{\Sigma}_{b}^+$: unconstrained and constrained estimators for between-subject covariance; PD\%, percentage of positive definite estimators.
\end{tablenotes}
\end{table}
\end{appendices}

\addcontentsline{toc}{section}{Bibliography}
\renewcommand{\bibsection}{\section*{References}}

\end{document}